\newtheorem{theorem}{Theorem}
\newtheorem{proposition}{Proposition}
\newtheorem{corollary}{Corollary}
\theoremstyle{definition}
\newtheorem{definition}{Definition}
\newtheorem{example}{\emph{Example}}
\let\doendexample\endexample
\renewcommand\endexample{~\hfill\scalebox{0.85}{$\LHD$}\doendexample} 
\newtheorem*{notation}{Notation}
\newcommand{\logicName}{Evolution Temporal Logic}
\newcommand{\logicShort}{EvTL}
\newcommand{\logicSymbol}{\mathcal L}
\newcommand{\logicMetric}{\ell_{\rho,\OT}}
\newcommand{\system}{\mathbf{s}}
\newcommand{\System}{\mathbf{S}_\D}
\newcommand{\sinistra}[1]{\textbf{target}(#1)}
\newcommand{\destra}[1]{\textbf{brink}(#1)}
\newcommand{\cstep}{\mathsf{step}}
\newcommand{\fevent}[2]{\Diamond^{#1} #2}
\newcommand{\fglob}[2]{\Box^{#1} #2}
\newcommand{\funtil}[3]{#1~\mathcal{U}^{#2}~#3}
\newcommand{\nats}{\mathbb{N}}
\newcommand{\borel}{{\cal B}}
\newcommand{\sat}[2]{\llbracket #1 \rrbracket_{#2}}
\newcommand{\D}{\mathcal{D}}
\newcommand{\ds}{\mathbf{d}}
\newcommand{\mesA}{\mathbb{A}}
\newcommand{\mesB}{\mathbb{B}}
\newcommand{\mesD}{\mathbb{D}}
\newcommand{\distrib}{\Delta}
\newcommand{\PP}{\mathit{Pr}}
\newcommand{\Var}{\mathrm{Var}}
\DeclareMathOperator{\Wasserstein}{\mathbf{W}}
\newcommand{\dirac}{\delta}
\newcommand{\OT}{\mathrm{OT}}
\newcommand{\traccione}{evolution sequence}
\newcommand{\tracciones}{evolution sequences}
\newcommand{\Traccione}{Evolution sequence}
\newcommand{\spell}{evolution }
\newcommand{\dataspace}{data space}
\newcommand{\Dataspace}{Data space}
\newcommand{\datastate}{data state}
\newcommand{\datastates}{data states}
\newcommand{\Datastate}{Data state}
\newcommand{\ES}{\mathcal{S}}
\newcommand{\dd}{\mathrm{d}}
\newcommand{\F}{\Sigma}
\newcommand{\m}{\mathfrak{m}}
\newcommand{\W}{\mathfrak{W}}
\newcommand{\w}{\mathfrak{w}}
\newcommand{\real}{\mathbb{R}} 
\newcommand{\var}[1]{\mathrm{var}(#1)}
\algnewcommand\algorithmicswitch{\textbf{match}}
\algnewcommand\algorithmiccase{\textbf{with}}
\algnewcommand\algorithmicassert{\texttt{assert}}
\algnewcommand\Assert[1]{\State \algorithmicassert(#1)}%
\begin{document}

\title{EvTL: A Temporal Logic for the Transient Analysis of Cyber-Physical Systems}
\author{
Valentina Castiglioni
\and
Michele Loreti
\and 
Simone Tini
}

\date{}

\maketitle

\begin{abstract}
The behaviour of systems characterised by a closed interaction of software components with the environment is inevitably subject to perturbations and uncertainties.
In this paper we propose a general framework for the specification and verification of requirements on the behaviour of these systems.
We introduce the \emph{Evolution Temporal Logic} (\emph{EvTL}), a stochastic extension of STL allowing us to specify properties of the probability distributions describing the transient behaviour of systems, and to include the presence of uncertainties in the specification.
We equip EvTL with a robustness semantics and we prove it sound and complete with respect to the semantics induced by the \emph{evolution metric}, i.e., a hemimetric expressing how well a system is fulfilling its tasks with respect to another one.
Finally, we develop a \emph{statistical model checking algorithm} for EvTL specifications.
As an example of an application of our framework, we consider a three-tanks laboratory experiment.
\end{abstract}


\section{Introduction}
\label{sec:introduction}

Cyber-physical systems~\cite{CPS-DEF}, IoT systems~\cite{K11} and smart devices are characterised by 
software applications that must be able to deal with \emph{highly changing operational conditions}, henceforth referred to as the \emph{environment}.
Examples of these applications are the software components of unmanned vehicles, controllers, (on-line) service applications, the devices in a smart house, etc.
In these contexts, the \emph{behaviour} of a system is the result of the \emph{interplay of the devices, or software components, with their environment}.

The main challenge in the analysis and verification of these systems is then the dynamical and, sometimes, unpredictable behaviour of the environment.
The highly dynamic behaviour of physical processes can only be approximated in order to become computationally tractable and can constitute a safety hazard for the devices in the system (like, e.g., an unexpected gust of wind for a drone that is autonomously setting its trajectory to avoid obstacles); some devices may appear, disappear, or become temporarily unavailable; faults or conflicts may occur (like, e.g., in a smart home the application responsible for the ventilation of a room may open a window in conflict with the one that has to limit the noise level); sensors may introduce measurement errors; etc.
Introducing \emph{uncertainties} and \emph{approximations} in these systems is therefore inevitable to achieve some degree of system robustness.

Clearly, this uncertain, stochastic, behaviour needs to be taken into account in the specification, and the consequent verification, of requirements over systems.
\begin{quote}
\emph{The main objective of this paper is then to provide a general framework to model and check properties of systems running under uncertainties.}
\end{quote}


\subsection*{The \traccione.}

Our starting point is the observation that the behaviour of the systems we study can be modelled in a purely \emph{data-driven} fashion:
while the environmental conditions are (partially) available to the software components as a set of data (for instance collected by sensors), the latter ones can in turn use data (for instance communicated through actuators) to (partially) control the environment and fulfil their tasks.
Hence, it is natural to adopt the (discrete time) model of \cite{CLT21} and represent the software-environment interplay in terms of the changes they induce on a set of application-relevant data, henceforth referred to as the \emph{\dataspace}.
Let us call \emph{\datastate{}} the description of the current state of the \dataspace.
Following \cite{CLT21}, at each step, both the software component and the environment induce some changes on the \datastate, providing thus a new \datastate{} at the next step.
This is an abstraction: in concrete, the application operates at a given frequency, thus changing the values in the \datastate{} at each time tick, while the environment modifies continuously these values between two ticks. 
We focus on the sum of the effects of the actions of both components.
However, changes on data are subject to the presence of uncertainties, meaning that it is not always possible to determine exactly the values assumed by data at the next step.
Therefore, we represent the changes induced at each step as a \emph{probability measure} on the attainable \datastates.
For instance, we can assume the computation steps of the system to be determined by a \emph{Markov kernel}.
The behaviour of the system is then entirely expressed by its \emph{\traccione}, i.e., the sequence of probability measures over the \datastates{} obtained at each step.
We remark that the \traccione{} of a system takes into account the effect of perturbations and uncertainties at each time step.
Therefore, by expressing requirements on the \traccione{} of a system we are able to verify the overall behaviour, as well as properties of the step-by-step behaviour.


\subsection*{A novel temporal logic.}

In the literature, \emph{quantitative extensions of model checking} have been proposed, like \emph{stochastic} (or \emph{probabilistic}) model checking \cite{BdAFK18,Bai16,KNP07,KP12}, and \emph{statistical model checking} \cite{SVA04,SVA05,ZPC13,HHA17,BMS16}.
These techniques rely either on a full specification of the system to be checked, or on the possibility of simulating the system by means of a Markovian model or Bayesian inference on samples.
Then, quantitative model checking is based on a specification of requirements in a \emph{probabilistic temporal logic}, such as PCTL \cite{HJ94}, CSL \cite{ASSB00,ASSB96}, probabilistic variants of LTL \cite{Pnu77}, etc.
Similarly, if Runtime Verification \cite{BFFR18} is preferred to off-line verification, probabilistic variants of MTL \cite{Koy90} and STL \cite{MN04} were proposed \cite{TH16,SK16}.
In the quantitative setting, uncertainties are usually dealt with in temporal logics by imposing \emph{probabilistic guarantees on a given property} to be satisfied:
All the aforementioned logics provide a quantitative construct of the form $\varphi_{\bowtie p}$ where $\varphi$ is a formula (the property), $p \in [0,1]$ is a threshold (the probabilistic guarantee), and $\bowtie \in \{<, \le, \ge, >\}$.
A system $\system$ satisfies $\varphi_{\bowtie p}$ if the total probability mass of the runs of $\system$ satisfying $\varphi$ is $\bowtie p$.

This approach is natural and has found several applications, like establishing formal guarantees on reachability.
However, \emph{it does not allow us to analyse the properties of the distributions describing the transient behaviour of the system}.
To specify complex system requirements, under uncertainties, we need to be able to characterise the distribution on data at a given time.
For instance, in the probabilistic risk assessment analysis of the decommissioning of a nuclear power plant, one of the main concerns is related to the likelihood of the deflagration of hydrogen \cite{MAW16}.
In detail, the objective there is to estimate the probability of an overpressurization failure of the reactor building given hydrogen deflagration.
This estimation follows from the \emph{comparison of the probability distribution of the pressure generated by the deflagration with the probability distribution of the pressure resistance of the reactor building}.
Informally, the area of the overlapping region between the two curves gives the desired estimation of the failure probability.
We remark that classic temporal logics, like PCTL, would allow us to verify whether the probability that the pressure generated by the deflagration (or, respectively, the probability of the pressure resistance) is within a given interval. 
However, they do not allow us to verify, in practice, whether the values of such pressure (or, respectively, resistance) are distributed according to a specific distribution.
In the situation described above, this disparity is crucial, since the risk assessment can only be carried out with that information.

To capture this kind of properties, we introduce the \emph{\logicName{}} (\emph{\logicShort}) as a probabilistic variant of STL characterised by the use of stochastic signals: the probabilistic operator $\varphi_{\bowtie p}$ is replaced by atomic propositions being probability measures over \datastates.
Intuitively, by modelling the evolution in time of the probability measures over data, we can gain useful information on the behaviour of the system, including its transient behaviour, and we can also express the presence of uncertainties explicitly in the formulae.


\subsection*{\logicShort{} robustness.}

We equip \logicShort{} with a real-valued semantics expressing the \emph{robustness} of the satisfaction of \logicShort{} specifications.
The \emph{robustness} of a system $\system$ with respect to a formula $\varphi$ is expressed as a real number $\sat{\varphi}{\system} \in [-1,1]$: if it is positive, $\system$ satisfies $\varphi$.
In detail, $\sat{\varphi}{\system}$ describes how much the behaviour of $\system$ has to be modified in order to violate (or satisfy) $\varphi$.
We can then interpret $\sat{\varphi}{\system}$ as an indicator of \emph{how well} $\system$ behaves with respect to the requirement $\varphi$.
Hence, the challenge is to properly formalise ``\emph{how well}".

To this end, we use the \emph{\spell metric} of \cite{CLT21}, a (time-dependent) hemimetric on the \tracciones{} of systems based on a \emph{hemimetric on \datastates{}} and the \emph{Wasserstein metric} \cite{W69}.
The former is defined in terms of a \emph{penalty function} allowing us to compare two \datastates{} only on the base of the objectives of the system. 
The latter lifts the hemimetric on \datastates{} to a hemimetric on probability measures on \datastates.
We then obtain a hemimetric on \tracciones{} 
as the maximum of the Wasserstein distances over time.
The reason to opt for a hemimetric, instead of a more standard (pseudo)metric, is that it allows us to compare the \emph{relative behaviour} of two systems and thus to express whether one system is \emph{better than} the other.
We use the \spell metric to define the robustness of systems with respect to \logicShort{} formulae.
As atomic propositions are probability measures over \datastates, by means of the \spell metric we can directly compare them to the probability measures in the \tracciones{} of systems. 
In this way, we obtain useful information on the differences in the behaviour of two systems from the comparison of their robustness. 
In particular, we prove the robustness to be \emph{sound} and \emph{complete} with respect to our metric semantics: whenever the robustness of $\system_1$ with respect to a formula $\varphi$ is greater than the distance between $\system_1$ and $\system_2$, then we can conclude that the robustness of $\system_2$ with respect to $\varphi$ is positive.

Finally, we provide a \emph{statistical model checking algorithm} for the verification of \logicShort{} specifications, consisting of three components:
\begin{inparaenum}
\item A simulation procedure for the \traccione{} of a system.
\item An algorithm, based on statistical inference, for the evaluation of the Wasserstein distance over probability measures.
\item A procedure that computes the robustness with respect to a formula $\varphi$, by inspecting its syntax.
\end{inparaenum}

In order to show how our techniques can be applied, we consider a very classical problem, namely the $n$-tanks experiment.
Several variants (with different number of tanks) of this problem have been widely used in control program education
(see, among the others, \cite{AL92,RKOMI97,Johansson00,ALVARADO2006}).
Moreover, some recently proposed cyber-physical security testbeds, like \emph{SWaT}~\cite{MT16,Antonioli2017}, can be considered as an evolution of the tanks experiment.  
Here, we consider a variant of the three-tanks laboratory experiment 
described in \cite{RKOMI97}.
We plan to tackle in the future more complex case studies, like the SWaT of \cite{MT16,Antonioli2017}, and the risk assessment analysis of the decommissioning of a nuclear power plant explained above.


\subsection*{Summary of contributions.}

Our main contributions can be summarised as follows:
\begin{enumerate}
\item We introduce the \emph{\logicName{}} (\emph{\logicShort}), a probabilistic variant of STL allowing us to express requirements on systems under uncertainties.
By means of \logicShort{} we can capture the properties of the transient probabilities of systems, and we can express explicitly the presence of uncertainties in the specifications.
\item We use the \spell metric of \cite{CLT21} to define the \emph{robustness} of \logicShort{} specifications, which we prove to be sound and complete with respect to our metric semantics.
\item We provide a statistical model checking algorithm for \logicShort{} specifications.
\item To show the adequacy of our approach we apply it to the three-tanks experiment.
\end{enumerate}
The technical proofs and the simplest parts of the algorithm can be found in the Appendix.


\section{Background}
\label{sec:background}


\paragraph{Measurable spaces}
A \emph{$\sigma$-algebra} over a set $\Omega$ is a family $\F$ of subsets of $\Omega$ s.t.\ $\Omega \in \F$ and $\F$ is closed under complementation and under countable union. 
The pair $(\Omega, \Sigma)$ is called a \emph{measurable space} and the sets in $\Sigma$ are called \emph{measurable sets}, ranged over by $\mesA,\mesB,\dots$.
For an arbitrary family $\Psi$ of subsets of $\Omega$, the $\sigma$-algebra \emph{generated} by $\Psi$ is the smallest $\sigma$-algebra over $\Omega$ containing $\Psi$.
In particular, given a topology $T$ over $\Omega$, the \emph{Borel $\sigma$-algebra} over $\Omega$, denoted $\borel(\Omega)$, is the $\sigma$-algebra generated by the open sets in $T$.
Given two measurable spaces $(\Omega_i,\Sigma_i)$, $i = 1,2$, the \emph{product $\sigma$-algebra} $\Sigma_1 \otimes \Sigma_2$ is the $\sigma$-algebra on $\Omega_1 \times \Omega_2$ generated by the sets $\{\mesA_1 \times \mesA_2 \mid \mesA_i \in \Sigma_i\}$.
In particular, for any $n \in N$, if $\Omega_1, \dots, \Omega_n$ are Polish spaces, then $\borel( \bigtimes_{i=1}^n \Omega_i) = \bigotimes_{i=1}^n \borel(\Omega_i)$ \cite{Bogachev}.


\paragraph{Distributions}
On a measurable space $(\Omega,\Sigma)$, a function $\mu \colon \Sigma \to [0,1]$ is a \emph{probability measure} if 
$\mu(\Omega) = 1$,
$\mu(\mesA) \ge 0$ for all $\mesA \in \Sigma$,
$\mu( \bigcup_{i \in I} \mesA_i) = \sum_{i \in I}\mu(\mesA_i)$ for every countable family of pairwise disjoint measurable sets $\{\mesA_i\}_{i\in I} \subseteq \Sigma$.
With a slight abuse of terminology, we shall use the term \emph{distribution} in place of probability measure.

We let $\distrib{(\Omega,\F)}$ denote the set of all distributions over $(\Omega,\F)$.
For $\omega \in \Omega$, the \emph{Dirac distribution} $\dirac_{\omega}$ is defined by $\dirac_\omega (\mesA) = 1$, if $\omega \in \mesA$, and $\dirac_{\omega}(\mesA) = 0$, otherwise, for all $\mesA \in \F$.
Given a countable set $(p_i)_{i \in I} \in \real$ with $p_i \ge 0$ and $\sum_{i \in I}p_i = 1$, the \emph{convex combination} of the distributions $\{\mu_i\}_{i \in I}$ is the distribution $\sum_{i \in I} p_i \cdot \mu_i$ defined by $(\sum_{i \in I} p_i \cdot \mu_i)(\mesA) = \sum_{i \in I} p_i \mu_i(\mesA)$, for all $\mesA \in \F$.

Assume measurable spaces $(\Omega,\F)$, $(\Omega',\F')$ and $\mu \in \distrib{(\Omega,\F)}$.
Then, $X \colon \Omega \to \Omega'$ is a \emph{random variable} if it is $\F$-measurable, i.e., $X^{-1}(\mesA) \in \F$ for all $\mesA \in \F'$.
The \emph{distribution measure} of $X$ is the distribution $\mu_X$ on $(\Omega',\F')$ defined 
by $\mu_X(\mesA)=\mu(X^{-1}(\mesA))$ for all $\mesA \in \F'$.
We write $X \sim \mu_X$ if $X$ has $\mu_X$ as distribution measure.


\paragraph{The Wasserstein hemimetric}

A \emph{metric} on a set $\Omega$ is a function $m \colon \Omega \times \Omega \to \real^{\ge0}$ s.t.\ $m(\omega_1,\omega_2) = 0$ if{f} $\omega_1 = \omega_2$, $m(\omega_1,\omega_2) = m(\omega_2,\omega_1)$, and $m(\omega_1,\omega_2) \le m(\omega_1,\omega_3) + m(\omega_3,\omega_2)$, for all $\omega_1,\omega_2,\omega_3 \in \Omega$.
We obtain a \emph{hemimetric} by relaxing the first property to $m(\omega_1,\omega_2) = 0$ if $\omega_1=\omega_2$, and by dropping the requirement on symmetry. 
A (hemi)metric $m$ is $l$-\emph{bounded} if $m(\omega_1,\omega_2) \le l$ for all $\omega_1,\omega_2 \in\Omega$.

In this paper we are interested in defining a \emph{hemimetric on distributions}.
To this end we will make use of the Wasserstein lifting \cite{W69} 
which is well defined on Polish spaces equipped with the Borel $\sigma$-algebra (see e.g. \cite{Vil08}).

\begin{definition}
[Wasserstein hemimetric]
\label{def:Wasserstein}
Consider a Polish space $\Omega$ and let $m$ be a hemimetric on $\Omega$.
For any two distributions $\mu$ and $\nu$ on $(\Omega,\borel(\Omega))$, the \emph{Wasserstein lifting} of $m$ to a distance between $\mu$ and $\nu$ is defined by
\[
\Wasserstein(m)(\mu,\nu) = \inf_{\w \in \W(\mu,\nu)} \int_{\Omega \times \Omega} m(\omega,\omega') \dd\w(\omega,\omega')
\]
where $\W(\mu,\nu)$ is the set of the \emph{couplings of $\mu$ and $\nu$}, namely the set of distributions $\w$ over the product space $(\Omega \times \Omega, \borel(\Omega \times \Omega))$ having $\mu$ and $\nu$ as left and right marginal, respectively, i.e., $\w(\mesA, \Omega) = \mu(\mesA)$ and $\w(\Omega, \mesA) = \nu(\mesA)$, for all $\mesA \in \borel(\Omega)$.
\end{definition}

Despite the original version of the Wasserstein distance being defined on a metric on $\Omega$, the Wasserstein hemimetric given above is well-defined.
This is proved in \cite{FR18}.
In particular, the Wasserstein hemimetric is given in \cite{FR18} as Definition 7 (considering the compound risk excess metric as in Equation (31)), and Proposition 4 in \cite{FR18} guarantees that it is indeed a well-defined hemimetric on $\distrib(\Omega,\borel(\Omega))$.
Moreover, Proposition 6 in \cite{FR18} guarantees that the same result holds for the hemimetric $m(x,y) = \max\{x-y,0\}$, which will play an important role in our work (cf.\ Definition~\ref{def:metric_DS} below).

As elsewhere in the literature, we shall henceforth use the term \emph{metric} in place of the term hemimetric.


\section{The model}
\label{sec:calcolo}

Following~\cite{CLT21}, we describe the behaviour of a system in terms of a \emph{probabilistic evolution of data}. 
This is fully motivated in various contexts.
For instance, in a cyber-physical system, the interaction between the logic component and the physical one can be naturally described by focusing on the values that are assumed by physical quantities, and on those that are detected by sensors and assigned to actuators.
Moreover, one introduces \emph{probability} as an abstraction mechanism in order to average over the effect of inessential or unknown details of the evolution of physical quantities which may be also impossible to observe in practice.
Probability also allows one to quantify the degree of approximation introduced by some instruments, such as the sensors.

Technically, we assume a \emph{\dataspace{}} defined by means of a \emph{finite} set of \emph{variables} $\mathrm{Var}$ representing:\ 
\begin{inparaenum}[i)]
\item \emph{environmental conditions}, such as pressure, temperature, humidity, etc.,
\item \emph{values perceived by sensors}, which depend on the value of environmental conditions and are unavoidably affected by imprecision and approximations introduced by sensors, and 
\item \emph{state of actuators}, which are usually elements in a discrete domain, like $\{\mathit{on},\mathit{off}\}$. 
\end{inparaenum}
Without loss of generality, we assume that 
for each $x \in \mathrm{Var}$ the domain $\D_x \subseteq \real$ is either a \emph{finite set} or a \emph{compact} subset of $\real$.
Notice that, in particular, this means that $\D_x$ is a Polish space.
Moreover, as a $\sigma$-algebra over $\D_x$ we assume the Borel $\sigma$-algebra, denoted $\borel_x$.
As $\Var$ is a finite set, we can always assume it to be ordered, namely $\Var=\{x_1,\dots,x_{n}\}$ for a suitable $n \in \nats$.

\begin{definition}
[\Dataspace]
We define the \emph{\dataspace{}} over $\mathrm{Var}$, notation $\D_{\Var}$, as the Cartesian product of the variables domains, namely $\D_{\Var} = \bigtimes_{i = 1}^n \D_{x_i}$.
Then, as a $\sigma$-algebra on $\D_\Var$ we consider the product $\sigma$-algebra $\borel_{\D_\Var} = \bigotimes_{i=1}^n \borel_{x_i}$.
\end{definition}

When no confusion arises, we use $\D$  for $\D_{\mathrm{Var}}$ and $\borel_\D$ for $\borel_{\D_{\mathrm{Var}}}$.
Then, we let $\System$ denote the set of systems having $\D$ as \dataspace.
Elements in $\D$ are the $n$-ples of the form $(v_1,\dots,v_n)$, with $v_i \in \D_{x_i}$, which can be also identified by means of functions $\ds \colon \mathrm{Var} \to \real$ from variables to values, with $\ds(x) \in \D_x$ for all $x \in \mathrm{Var}$. 
Each function $\ds$ identifies a particular configuration in the \dataspace, and it is thus called a \emph{\datastate}.

\begin{definition}
[\Datastate]
\label{def:datastate}
A \emph{\datastate{}} is a mapping $\ds \colon \mathrm{Var} \to \real $ from variables to values, with $\ds(x) \in \D_x$ for all $x \in \mathrm{Var}$.  
\end{definition}

We define an \emph{\traccione{}} as a sequence of  distributions over \datastates{} describing the dynamics of a system.
This sequence is countable as we adopt a discrete time approach.
In this paper we do not focus on how evolution sequences are generated: we simply assume a function $\cstep \colon \D \to \distrib(\D,\borel_\D)$ governing the evolution of the system.
In particular, we recall that, at each step, the activity of the software component depends on the available data and environment conditions, and potential modifications by the component to these data may trigger different behaviours of the environment. 
Hence, it is reasonable to assume that the evolution at each step depends only on the current state.
Consequently, we can assume that $\cstep$ is a \emph{Markov kernel} and that our evolution sequence is the \emph{Markov process} generated by $\cstep$ (see Definition~\ref{def:traccione} below).
Formally, $\cstep(\ds)(\mesD)$ expresses the probability of a system $\system$ to reach a \datastate{} in $\mesD$ from the \datastate{} $\ds$ in one computation step.
Clearly, each system is characterised by a particular function $\cstep$.
Moreover, it is also natural to assume that each system will start its computation from a determined configuration.
Hence, for each system $\system \in \System$, we let $\ds_\system$ denote the \datastate{} from which $\system$ starts its computation.

\begin{definition}
[\Traccione]
\label{def:traccione}
Assume a Markov kernel $\cstep \colon \D \to \distrib(\D,\borel_\D)$ generating the behaviour of system $\system$.
Then, the \emph{\traccione{}} of $\system$ is a countable sequence of distributions in $\distrib(\D,\borel_\D)$ of the form $\ES_{\system} = \ES_{\system, 0} \dots \ES_{\system, n} \dots$ such that, for all $\mesD \in \borel_{\D}$: 
\begin{align*}
& \ES_{\system,0}(\mesD)  = \dirac_{\ds_\system}(\mesD)
\\
& \ES_{\system, i+1}(\mesD)  =  \int_{\D} \cstep(\ds)(\mesD) \; \dd(\ES_{\system, i}(\ds)) 
\enspace .
\end{align*}
\end{definition}

For a possible definition of $\cstep$, we refer to~\cite{CLT21}.
There, a (cyber-physical) system is specified as a combination of a \emph{program} (or logic component), having a discrete behaviour and reading/writing data at each time instant, and a \emph{probabilistic evolution function}, which models the effects of the environment (physical component) on data between two time instants.
Then, $\cstep$ is defined by combining the effects on data of the program with those of the evolution function.

A typical scenario can be the following and shows that the discrete-time and the Markov assumptions are fully motivated.

\begin{notation}
In the examples throughout the paper we will slightly abuse of notation and use a variable name $x$ to denote all: the variable $x$, the (possible) function describing the evolution in time of the values assumed by $x$, and the (possible) random variable describing the distribution of the values that can be assumed by $x$ at a given time.
The role of the variable name $x$ will always be clear from the context.
\end{notation}

\begin{figure}[tbp]
\centering
\scalebox{0.7}{
\begin{tikzpicture}
\draw[-](0,4)--(0,0.5);
\draw[-](0,0.5)--(9.7,0.5);
\draw[-](10.7,0.5)--(11,0.5);
\draw[-](2.5,4)--(2.5,1);
\draw[-](2.5,1)--(3.5,1);
\draw[-](3.5,1)--(3.5,4);
\draw[-](6,4)--(6,1);
\draw[-](6,1)--(7,1);
\draw[-](7,1)--(7,4);
\draw[-](9.5,4)--(9.5,1);
\draw[-](9.5,1)--(9.7,1);
\draw[-](10.7,1)--(11,1);
\node at (1.25,0){Tank 1};
\node at (4.75,0){Tank 2};
\node at (8.25,0){Tank 3};
\draw[-latex](2.75,0.6)--(3.25,0.6);
\node at (3,0.8){$q_{12}$};
\draw[-latex](6.25,0.6)--(6.75,0.6);
\node at (6.5,0.8){$q_{23}$};
\draw[-latex](10.8,0.6)--(11.3,0.6);
\node at (11.05,0.8){$q_0$};
\draw[dashed,thick,red](-0.1,3.5)--(2.5,3.5);
\node at (-0.2,3.7){\textcolor{red}{$l_M$}};
\draw[dashed,thick,red](-0.1,0.6)--(2.5,0.6);
\node at (-0.2,0.8){\textcolor{red}{$l_m$}};
\draw[dashed,thick,red](3.5,3.5)--(6,3.5);
\draw[dashed,thick,red](3.5,0.6)--(6,0.6);
\draw[dashed,thick,red](7,3.5)--(9.5,3.5);
\draw[dashed,thick,red](7,0.6)--(9.5,0.6);
\draw[dashed,thick,ForestGreen](0,2.5)--(2.6,2.5);
\node at (2.8,2.7){\textcolor{ForestGreen}{$l_{g}$}};
\draw[dashed,thick,ForestGreen](3.5,2.5)--(6,2.5);
\draw[dashed,thick,ForestGreen](7,2.5)--(9.5,2.5);
\draw[blue](-0.1,3)--(2.5,3);
\node at(-0.2,3.2){\textcolor{blue}{$l_1$}};
\draw[blue](3.4,2)--(6,2);
\node at(3.2,2.2){\textcolor{blue}{$l_2$}};
\draw[blue](6.9,2.2)--(9.5,2.2);
\node at(6.8,2.4){\textcolor{blue}{$l_3$}};
\draw[-](0.8,6)--(0.8,5.5);
\draw[-](1.7,6)--(1.7,5.5);
\draw[-](0.2,5.5)--(2.3,5.5);
\draw[-](0.2,5.5)--(0.2,4.5);
\draw[-](2.3,5.5)--(2.3,4.5);
\draw[-](0.2,4.5)--(2.3,4.5);
\draw[-](0.8,4.5)--(0.8,4);
\draw[-](1.7,4.5)--(1.7,4);
\node at (1.25,5){pump};
\draw[-latex](1.25,4.25)--(1.25,3.75);
\node at (1.45,4.1){$q_1$};
\draw[->](3.5,5)--(2.5,5);
\node at (3.5,5.8){\scalebox{0.85}{Flow rate under}};
\node at (3.5,5.5){\scalebox{0.85}{the control of}};
\node at (3.5,5.2){\scalebox{0.85}{the software}};
\draw[-](7.8,6)--(7.8,4);
\draw[-](8.7,6)--(8.7,4);
\draw[-latex](8.25,4.25)--(8.25,3.75);
\node at (8.45,4.1){$q_2$};
\draw[->](6.5,5)--(7.5,5);
\node at (6.5,5.8){\scalebox{0.85}{Flow rate under}};
\node at (6.5,5.5){\scalebox{0.85}{the control of}};
\node at (6.5,5.2){\scalebox{0.85}{the environment}};
\draw[-](9.7,0.3)--(9.7,1.2);
\draw[-](10.7,0.3)--(10.7,1.2);
\draw[-](9.7,0.3)--(10.7,0.3);
\draw[-](9.7,1.2)--(10.7,1.2);
\node at (10.2,0.75){pump};
\draw[->](10.2,2.3)--(10.2,1.3);
\node at (10.8,3.1){\scalebox{0.85}{Flow rate under}};
\node at (10.8,2.8){\scalebox{0.85}{the control of}};
\node at (10.8,2.5){\scalebox{0.85}{the software}};
\draw[fill,cyan,opacity=0.2] (1,5.5) rectangle (1.5,6);
\draw[fill,cyan,opacity=0.2] (1.15,4.1) rectangle (1.35,4.5);
\draw[fill,cyan,opacity=0.2] (8,4.1) rectangle (8.5,6);
\draw[fill,cyan,opacity=0.2] (0,0.5) rectangle (2.5,3);
\draw[fill,cyan,opacity=0.2] (3.5,0.5) rectangle (6,2);
\draw[fill,cyan,opacity=0.2] (7,0.5) rectangle (9.5,2.2);
\draw[fill,cyan,opacity=0.2] (2.5,0.5) rectangle (3.5,1);
\draw[fill,cyan,opacity=0.2] (6,0.5) rectangle (7,1);
\draw[fill,cyan,opacity=0.2] (9.5,0.5) rectangle (9.7,1);
\draw[fill,cyan,opacity=0.2] (10.7,0.5) rectangle (11,0.8);
\end{tikzpicture}
}
\caption{Schema of the three-tanks scenario.}
\label{fig:threetanks}
\end{figure}
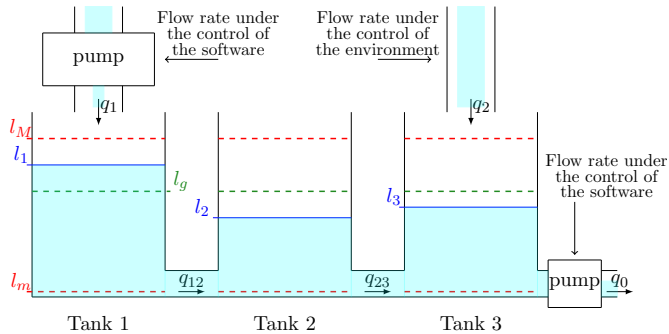

\begin{example}
\label{ex:scenario}
As outlined in the Introduction, as an example of application we consider a variant of the three-tanks laboratory experiment from \cite{RKOMI97}.
As schematised in Figure~\ref{fig:threetanks}, there are three identical tanks connected by two pipes.
Water enters in the first and in the last tank by means of 
a pump and an incoming pipe, respectively.  
The last tank is equipped with an outlet pump. 
We assume that water flows through the incoming pipe with a rate that is determined by the environment, whereas the flow rate through the two pumps is under the control of a software component. 
The task of the system consists in guaranteeing that the levels of water in the three tanks fulfil some given requirements.

The level of water in tank $i$ at time $\tau$ is denoted by $l_i(\tau)$, for $i=1,2,3$, and is always in the range $[l_{m},l_{M}]$, for suitable $l_{m}$ and $l_{M}$ giving, respectively, the minimum and maximum level of water in the tanks. 
The dynamics of $l_i(\tau)$ can be modelled via the following set of stochastic difference equations, with sampling time interval $\Delta\tau =1$:
\begin{equation}
\label{eq:tankstynamics}
\begin{array}{rcl}
l_1(\tau+1) & = & l_1(\tau) + q_1(\tau) - q_{12}(\tau)\\
l_2(\tau+1) & = & l_2(\tau) + q_{12}(\tau) - q_{23}(\tau)\\
l_3(\tau+1) & = & l_3(\tau) + q_2(\tau) + q_{23}(\tau) - q_0(\tau)
\end{array}    
\end{equation}
where $q_1$ denotes the flow rate of the pump connected to the first tank, $q_2$ denotes the flow rate of the incoming pipe, $q_{ij}$ denotes the flow rate from tank $i$ to tank $j$, and $q_0$ denotes the flow rate of the outlet pump. 
Note that $q_{12}$ and $q_{23}$ depend on $l_i$ and on the physical dimensions of the three tanks. 
We omit here all the details on the evaluation of the flow rates $q_{12}$ and $q_{23}$, that are discussed in \cite{RKOMI97} 
and reported in Appendix~\ref{app:flow_rate_details}.
We assume that the flow rate $q_2$ is under the control of the environment, so that its value is affected by the uncertainties and, thus, can only be described probabilistically.
Conversely, the two pumps are controlled by a software component that, by reading the values of $l_i(\tau)$ can select the value of $q_1(\tau+1)$ and $q_0(\tau+1)$.
The three rates assume values in the range $[0,q_{M}]$, for a given maximal flow rate $q_{M}$.
The exact value of $q_2(\tau)$ is unknown and will be evident only at execution time. 
However, we can consider different scenarios that render the assumptions we have on the environment. 
For instance, we can assume that the flow rate of the incoming pipe is normally distributed with mean $q_{av}$ and variance $\delta_q$:
\begin{equation}
\label{eq:eq_q1_option1}
\begin{array}{rcl}
q_2(\tau+1) & \sim & N(q_{av},\delta_q)
\end{array}    
\end{equation}
In a more elaborated scenario, we could assume that $q_2$ varies at each step by a value $v$ that is normally distributed with mean $0$ and variance $1$. 
In this case, we have:
\begin{equation}
\label{eq:eq_q1_option2}
\begin{array}{rcl}
v(\tau) & \sim & N(0,1) \\
q_2(\tau+1) & = & \min\left\{ \max\left\{ 0 , q_2(\tau)+v(\tau) \right\} , q_{M} \right\}
\end{array}    
\end{equation}

The equations describing the behaviour of the controller governing the behaviour of the two pumps is the following:
\begin{equation}
\label{eq:eq_q2}
q_1(\tau+1) \! = \! \left\{ \!\!
\begin{array}{ll}
\max\{ 0 , q_1(\tau) - q_s \} & \! \text{if } l_1(\tau)>l_{g}+\delta_l \\ 
\min\{ q_{M} ,  q_1(\tau) + q_s \} & \! \text{if } l_1(\tau)<l_{g}-\delta_l\\
q_1(\tau) & \! \text{otherwise.}
\end{array}
\right.
\end{equation}
\begin{equation}
\label{eq:eq_q0}
q_0(\tau+1) \! = \! \left\{ \!\!
\begin{array}{ll}
\min\{ q_{M} , q_0(\tau) + q_s \} & \! \text{if } l_3(\tau)>l_{g}+\delta_l \\ 
\max\{ 0 ,  q_0(\tau) - q_s \} & \! \text{if } l_3(\tau)<l_{g}-\delta_l\\
q_0(\tau) & \! \text{otherwise.}
\end{array}
\right.
\end{equation}
Above, $l_{g}$ is the desired level of water in the tanks, while $q_s$ is the variation of the flow rate that is controllable by the \emph{pump}. 
Moreover, $\delta_l$ is a threshold on the read level of water. 
The idea behind Equation~\eqref{eq:eq_q2} is that when $l_1$ is greater than $l_{g}+\delta_l$, the flow rate of the pump is decreased by $q_s$. 
Similarly, when $l_1$ is less than $l_{g}-\delta_l$, that rate is increased by $q_s$. 
The reasoning for Equation~\eqref{eq:eq_q0} is symmetrical.
In both cases, the use of the threshold $\delta_l$ prevents continuous contrasting updates.

The \dataspace{} for the considered system is then defined on the set of variables $\Var=\{l_1,l_2,l_3,q_1,q_2,q_0\}$ and the $\cstep$ function determining its behaviour is derived from Equations~\eqref{eq:tankstynamics},~\eqref{eq:eq_q1_option1} (or~\eqref{eq:eq_q1_option2}),~\eqref{eq:eq_q2} and~\eqref{eq:eq_q0} in the obvious way.
\end{example}


\section{The \spell metric}
\label{sec:metriche}

Our aim is now to introduce a distance measuring the differences in the behaviour of systems that will be used to define the robustness of \logicShort{} specifications.
As the behaviour of a system is totally expressed by its \traccione{}, it is natural to use the \emph{\spell metric} of \cite{CLT21}.
The definition of the hemimetric in \cite{CLT21} is based on the observation that, in most applications, the tasks of the system can be expressed in a purely data-driven fashion.
At any time step, any difference between the desired value of some parameters of interest and the data actually obtained can be interpreted as a flaw in systems behaviour.
Hence, we can introduce a \emph{penalty function} $\rho \colon \D \to [0,1]$, i.e., a continuous function that assigns to each \datastate{} $\ds$ a penalty in $[0,1]$ expressing how far the values of the parameters of interest in $\ds$ are from their desired ones (hence $\rho(\ds) = 0$ if $\ds$ respects all the parameters). 
For instance, the penalty function can be though of as a linear function assigning a value in $[0,1]$ to data states that is directly proportional to the (Euclidean) distance between the values of the parameters in the data state and their optimal value.
However, please bear in mind that this is just one possibility, as the actual definition of the penalty function depends only on the application context.

\begin{example}
\label{ex:penalty}
In the three-tanks scenario from Example~\ref{ex:scenario}, a requirement on system behaviour can be that each $l_{i}$ should be at the level $l_{g}$.
Hence, we can define penalty functions $\rho^i$, for $i=1,2,3$, as the normalised distance between the current level of water $\ds(l_i)$ and $l_{g}$, namely:
\begin{equation}
    \label{eq:single_penalty_function}
    \rho^{i}(\ds) = 
    \frac{|\ds(l_i) - l_{g}|}{\max\{l_{M}-l_{g},l_{g}-l_{m}\}}
\end{equation}
\end{example}

We can then use a penalty function $\rho$ to obtain a \emph{distance on \datastates}, namely a $1$-bounded \emph{hemimetric} $m^\D_{\rho}$: given the \datastates{} $\ds_1$ and $\ds_2$, $m^\D_{\rho}(\ds_1,\ds_2)$ expresses how much $\ds_2$ is \emph{worse} than $\ds_1$ according to parameters of interest, and thus according to $\rho$.
Since some parameters can be time-dependent, so is $\rho$: at any time step $\tau$, the $\tau$-penalty function $\rho_\tau$ compares the \datastates{} with respect to the values of the parameters expected at time $\tau$.

\begin{definition}
[Metric on \datastates{}]
\label{def:metric_DS}
For any time step $\tau$, let $\rho_{\tau} \colon \D \rightarrow [0,1]$ be the $\tau$-penalty function on $\D$.
The $\tau$-\emph{metric on \datastates{}} in $\D$, $m_{\rho,\tau}^{\D} \colon \D \times \D \to [0,1]$, is defined, for all $\ds_1,\ds_2 \in \D$, by $m_{\rho,\tau}^{\D}(\ds_1,\ds_2) = \max\{\rho_{\tau}(\ds_2)-\rho_{\tau}(\ds_1), 0 \}$.
\end{definition}

\begin{proposition}
\label{prop:m_metrica}
Function $m_{\rho,\tau}^{\D}$ is a 1-bounded hemimetric on $\D$.
\end{proposition}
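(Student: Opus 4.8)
The plan is to verify directly the three defining conditions of a $1$-bounded hemimetric for $m_{\rho,\tau}^{\D}$, exploiting throughout the fact that $\rho_\tau$ takes values in $[0,1]$ and that $m_{\rho,\tau}^{\D}$ is obtained by post-composing the difference $\rho_\tau(\ds_2)-\rho_\tau(\ds_1)$ with the positive-part function $t\mapsto\max\{t,0\}$. First I would record non-negativity and boundedness: for all $\ds_1,\ds_2\in\D$ we have $m_{\rho,\tau}^{\D}(\ds_1,\ds_2)=\max\{\rho_\tau(\ds_2)-\rho_\tau(\ds_1),0\}\ge 0$ by definition of $\max$, and since $\rho_\tau(\ds_2)\le 1$ and $\rho_\tau(\ds_1)\ge 0$ we get $\rho_\tau(\ds_2)-\rho_\tau(\ds_1)\le 1$, hence $m_{\rho,\tau}^{\D}(\ds_1,\ds_2)\le 1$, so the map is $1$-bounded.

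Next I would check the (relaxed) identity-of-indiscernibles condition required of a hemimetric, namely that the distance vanishes on the diagonal: for any $\ds\in\D$, $m_{\rho,\tau}^{\D}(\ds,\ds)=\max\{\rho_\tau(\ds)-\rho_\tau(\ds),0\}=\max\{0,0\}=0$. Note that I would not claim the converse implication, nor symmetry, as a hemimetric is precisely the relaxation that drops these.

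The only step with any content is the triangle inequality, and the key observation is the subadditivity of the positive-part function, i.e. $(x+y)^{+}\le x^{+}+y^{+}$ for all $x,y\in\real$, where $t^{+}:=\max\{t,0\}$. This is immediate: if $x+y\le 0$ the left-hand side is $0$ and the inequality is trivial; if $x+y>0$ then $(x+y)^{+}=x+y\le x^{+}+y^{+}$ because $x\le x^{+}$ and $y\le y^{+}$. Applying this with $x=\rho_\tau(\ds_3)-\rho_\tau(\ds_1)$ and $y=\rho_\tau(\ds_2)-\rho_\tau(\ds_3)$, whose sum telescopes to $\rho_\tau(\ds_2)-\rho_\tau(\ds_1)$, I obtain
\[
m_{\rho,\tau}^{\D}(\ds_1,\ds_2)=\big(\rho_\tau(\ds_2)-\rho_\tau(\ds_1)\big)^{+}\le\big(\rho_\tau(\ds_3)-\rho_\tau(\ds_1)\big)^{+}+\big(\rho_\tau(\ds_2)-\rho_\tau(\ds_3)\big)^{+}=m_{\rho,\tau}^{\D}(\ds_1,\ds_3)+m_{\rho,\tau}^{\D}(\ds_3,\ds_2),
\]
which is exactly the triangle inequality. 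Together with the previous points this shows $m_{\rho,\tau}^{\D}$ is a $1$-bounded hemimetric, and I do not anticipate any real obstacle here — the argument is elementary once subadditivity of $(\cdot)^{+}$ is isolated.
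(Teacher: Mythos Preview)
Your proof is correct and complete; the paper itself states this proposition without proof, presumably regarding it as routine. Your argument via subadditivity of the positive-part function is exactly the natural way to handle the triangle inequality here, and nothing is missing.
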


Notice that $m_{\rho,\tau}^{\D}(\ds_1,\ds_2) > 0$ if and only if $\rho_\tau(\ds_2) > \rho_\tau(\ds_1)$, i.e., the penalty assigned to $\ds_2$ is higher than that assigned to $\ds_1$.
For this reason, we say that $m_{\rho,\tau}^{\D}(\ds_1,\ds_2)$ expresses \emph{how worse} $\ds_2$ is than $\ds_1$ with respect to the objectives of the system.

By means of the Wasserstein distance (cf.\ Definition~\ref{def:Wasserstein}), we can lift $m^{\D}_{\rho,\tau}$ to a hemimetric $\Wasserstein(m^{\D}_{\rho,\tau})$ over distributions in $\distrib(\D,\borel_\D)$.
The \spell hemimetric of \cite{CLT21} is then obtained as a \emph{weighted infinity norm} of the tuple of the Wasserstein distances between the distributions in the \tracciones.
As in most applications the changes on data induced by the systems can be appreciated only along wider time intervals than a computation step by the logical component (like, e.g., in the case of the evolution of the temperature in a room), a \emph{discrete, finite} set $\OT$ of time steps at which the modifications on data give us useful information on the evolution of the system is considered.

As weight we consider a non-increasing function $\lambda \colon \OT \to (0,1]$ allowing us to express how much the distance at time $\tau$ affects the overall distance between two systems.
Following the terminology used for behavioural metrics \cite{AHM03,DGJP04,CLT20}, we refer to $\lambda$ as to the \emph{discount function}, and to $\lambda(\tau)$ as to the \emph{discount factor at time} $\tau$.

\begin{definition}
[Evolution metric]
\label{def:spell_metric}
Assume a finite set $\OT$ of observation times and a discount function $\lambda$.
For each $\tau \in \OT$, let $\rho$ be a penalty function and let $m^\D_{\rho,\tau}$ be the $\tau$-metric on \datastates{} defined on it.
Then, the $\lambda$-\emph{\spell metric} over $\rho$ and $\OT$, is the mapping $\m^{\lambda}_{\rho,\OT} \colon \System \times \System \to [0,1]$ defined, for all systems $\system_1,\system_2$, by
\[
\m^{\lambda}_{\rho,\OT}(\system_1,\system_2) = \max_{\tau \in \OT}\, \lambda(\tau) \cdot \Wasserstein(m^\D_{\rho,\tau})(\ES_{\system_1,\tau}, \ES_{\system_2,\tau}).
\]
\end{definition}

\begin{proposition}
Function $\m^{\lambda}_{\rho,\OT}$ is a 1-bounded hemimetric on $\System$.
\end{proposition}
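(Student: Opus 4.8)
The plan is to verify the three defining properties of a $1$-bounded hemimetric for $\m^{\lambda}_{\rho,\OT}$ directly from its definition as $\max_{\tau \in \OT} \lambda(\tau) \cdot \Wasserstein(m^\D_{\rho,\tau})(\ES_{\system_1,\tau}, \ES_{\system_2,\tau})$, leveraging the facts already established earlier in the excerpt: namely that each $m^\D_{\rho,\tau}$ is a $1$-bounded hemimetric on $\D$ (Proposition~\ref{prop:m_metrica}) and that the Wasserstein lifting of a hemimetric on a Polish space is again a hemimetric (the discussion following Definition~\ref{def:Wasserstein}, citing \cite{FR18}). Note that $\D$ is Polish, being a finite product of finite sets and compact subsets of $\real$, so the Wasserstein lifting applies to each $m^\D_{\rho,\tau}$.

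First I would establish $1$-boundedness and non-negativity. Since each $m^\D_{\rho,\tau}$ takes values in $[0,1]$, every coupling integral $\int m^\D_{\rho,\tau} \dd\w$ lies in $[0,1]$, hence so does the infimum $\Wasserstein(m^\D_{\rho,\tau})(\ES_{\system_1,\tau},\ES_{\system_2,\tau})$; multiplying by $\lambda(\tau) \in (0,1]$ keeps it in $[0,1]$, and the maximum over the finite set $\OT$ stays in $[0,1]$. Second, for the identity-of-indiscernibles-type property (in the relaxed hemimetric form), if $\system_1 = \system_2$ then $\ES_{\system_1,\tau} = \ES_{\system_2,\tau}$ for every $\tau$, and since each $\Wasserstein(m^\D_{\rho,\tau})$ is a hemimetric it vanishes on identical arguments, so each term in the maximum is $0$ and thus $\m^{\lambda}_{\rho,\OT}(\system_1,\system_1) = 0$. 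Third, for the triangle inequality, fix systems $\system_1,\system_2,\system_3$ and an arbitrary $\tau \in \OT$; by the triangle inequality for the hemimetric $\Wasserstein(m^\D_{\rho,\tau})$ we get
\[
\Wasserstein(m^\D_{\rho,\tau})(\ES_{\system_1,\tau},\ES_{\system_3,\tau}) \le \Wasserstein(m^\D_{\rho,\tau})(\ES_{\system_1,\tau},\ES_{\system_2,\tau}) + \Wasserstein(m^\D_{\rho,\tau})(\ES_{\system_2,\tau},\ES_{\system_3,\tau}).
\]
Multiplying by $\lambda(\tau) > 0$ and bounding each of the two summands on the right by the corresponding maximum over $\OT$ yields $\lambda(\tau)\cdot\Wasserstein(m^\D_{\rho,\tau})(\ES_{\system_1,\tau},\ES_{\system_3,\tau}) \le \m^{\lambda}_{\rho,\OT}(\system_1,\system_2) + \m^{\lambda}_{\rho,\OT}(\system_2,\system_3)$; since $\tau$ was arbitrary, taking the maximum over $\tau \in \OT$ on the left gives the desired inequality. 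I would also explicitly note that symmetry is \emph{not} claimed, consistent with the hemimetric notion, because the underlying $m^\D_{\rho,\tau}(\ds_1,\ds_2) = \max\{\rho_\tau(\ds_2)-\rho_\tau(\ds_1),0\}$ is asymmetric.

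The argument is essentially routine once the two cited building blocks are in place; the only mild subtlety — and the step I would be most careful about — is confirming that the Wasserstein lifting is legitimately applicable here, i.e., that $\D$ equipped with $\borel_\D$ is a Polish space with the Borel $\sigma$-algebra of its product topology (so that $\borel_\D = \bigotimes_i \borel_{x_i}$ by the fact recalled in the Background) and that $m^\D_{\rho,\tau}$ is exactly of the form $\max\{x-y,0\}$ for which Proposition~6 of \cite{FR18} guarantees well-definedness of the hemimetric lifting. Everything else is a direct transfer of the hemimetric axioms through the order-preserving operations $\lambda(\tau)\cdot(-)$ and $\max_{\tau\in\OT}(-)$, which is exactly why the maximum (rather than a sum) over the finite index set $\OT$ is convenient: it commutes cleanly with the triangle inequality.
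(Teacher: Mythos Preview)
The paper states this proposition without proof; it is treated as a routine fact following directly from Proposition~\ref{prop:m_metrica} and the properties of the Wasserstein lifting recalled after Definition~\ref{def:Wasserstein}. Your argument is correct and is precisely the standard verification one would supply: each $\Wasserstein(m^\D_{\rho,\tau})$ is a $1$-bounded hemimetric on $\distrib(\D,\borel_\D)$ by the cited results from \cite{FR18}, and the operations $\lambda(\tau)\cdot(-)$ with $\lambda(\tau)\in(0,1]$ and $\max_{\tau\in\OT}(-)$ preserve the hemimetric axioms and $1$-boundedness. There is nothing to compare against, and no gap in what you wrote.
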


Notice that if $\lambda$ is a \emph{strictly} non-increasing function, then it specifies how much the distance of \emph{future events} is mitigated, and it guarantees that to obtain upper bounds on the \spell metric only a \emph{finite} number of observations is needed.
Hence, the choice of having $\OT$ finite is not too restrictive.


\section{The Evolution Temporal Logic}
\label{sec:requirements}

In this section we introduce the \emph{\logicName{}} (\logicShort), which allows us to specify requirements on \tracciones, and thus the properties of systems behaviour under the presence of uncertainties.

The logic bases on two atomic properties, $\sinistra{\mu}^\rho_{p}$ and $\destra{\mu}^\rho_{p}$, where $\mu$ is a distribution over \datastates{} in $(\D,\borel_\D)$, $\rho$ is a penalty function and $p$ is a real in $[0,1]$.
Informally, $\sinistra{\mu}^\rho_p$ can be used to express a desirable behaviour, whereas $\destra{\mu}^\rho_p$ can be used for unwanted, or hazardous, behaviours.
These formulae are evaluated on a distribution $\ES_{\system, \tau}$ in the \traccione{} of a system $\system$.
Let us analyse the formula $\sinistra{\mu}^{\rho}_p$ in detail.
In this case, $\mu$ is the \emph{desired} distribution over data states.
So, to establish whether the system exhibits a proper behaviour, we compare $\mu$ with the distribution $\ES_{\system, \tau}$ obtained by the system: our means of comparison is the Wasserstein lifting of the hemimetric between data states evaluated with respect to the penalty $\rho$.
(Notice that $\rho$ is a parameter of the formula $\sinistra{\mu}^{\rho}_p$.
This is due to the fact that, clearly, the penalty is not a property of the system but part of the requirements imposed on its behaviour.)
As $\mu$ is our target distribution, it is natural to check whether $\ES_{\system, \tau}$ is \emph{worse than} $\mu$, i.e., to evaluate the distance $\Wasserstein(m^\D_{\rho,\tau})(\mu,\ES_{\system,\tau})$.
Clearly, given the presence of uncertainties, it would not be feasible to say that the system satisfies the considered formula if and only if $\Wasserstein(m^\D_{\rho,\tau})(\mu,\ES_{\system,\tau}) = 0$.
Instead, we use the parameter $p$ as a tolerance on the distance: if $\ES_{\system,\tau}$ is such that $\Wasserstein(m^\D_{\rho,\tau})(\mu,\ES_{\system,\tau}) \le p$, then the behaviour of the system can be considered acceptable.
In other words, $p$ is the \emph{maximal acceptable hemi-distance} between the desired behaviour $\mu$ and the current behaviour $\ES_{\system,\tau}$.

Conversely, in the formula $\destra{\mu}^\rho_p$ the distribution $\mu$ expresses some \emph{unwanted}, \emph{hazardous}, behaviour.
Hence, the distribution $\ES_{\system,\tau}$ reached by the system must be better than $\mu$, i.e., $\Wasserstein(m^\D_{\rho,\tau})(\ES_{\system,\tau},\mu) > 0$.
Also in this case, due to the presence of uncertainties, we need to make use of a threshold parameter $p$: assuming a distribution $\ES_{\system,\tau}$ acceptable when it is only \emph{slightly} better than $\mu$ can still lead to an unwanted behaviour (because, in this case, the difference between the two distributions may only be due to some noise).
Hence, we let $p$ be the \emph{minimal required hemi-distance} between $\ES_{\system,\tau}$ and $\mu$, so that $\ES_{\system,\tau}$ is an acceptable behaviour if and only if $\Wasserstein(m^\D_{\rho,\tau})(\ES_{\system,\tau},\mu) \ge p$.

Let $\var{\mu} \subseteq \Var$ be the set of data variables over which the distribution $\mu$ is defined.
Similarly, for a penalty function $\rho$, we can consider the set $\var{\rho} \subseteq \Var$.

\begin{definition}[\logicShort]
The modal logic \logicShort{} consists in the set of formulae $\logicSymbol$ defined by the following syntax:
\[
\begin{array}{rcl}
\varphi & ::= & \top \ \;\;  |  \; \; 
\sinistra{\mu}^\rho_{p} \ \;\;  |  \; \; 
\destra{\mu}^\rho_{p} \ \;\;  |  \\
& & 
\neg\varphi \ \;\;  |  \; \; 
\varphi \vee \varphi \ \;\;  |	\;\; \funtil{\varphi_1}{[a,b]}{\varphi_2}
\end{array}
\]
with $\varphi$ ranging over $\logicSymbol$, $\mu \in \distrib{(\D,\borel_\D)}$ a distribution over data states,
$p \in [0,1]$, $\rho$ a penalty function such that $\var{\rho} \subseteq \var{\mu}$, $p \in [0,1]$ and $[a,b]$ an interval in $\OT$.
\end{definition}

Disjunction and negation are the standard Boolean connectives, and $\funtil{\varphi_1 }{[a,b]}{\varphi_2}$ is the \emph{bounded until} operator stating that $\varphi_1$ is satisfied until, at a time in $[a,b]$, $\varphi_2$ is. 

For any penalty function $\rho$, we denote by $\logicSymbol_\rho$ the sub-class of $\logicSymbol$ with atomic propositions of the form $(\cdot)^{\rho}_{(\cdot)}$.

Formulae are evaluated over systems and observable times. 
In a \emph{quantitative semantics} approach, for a formula $\varphi$, a system $\system$, and a time instant $\tau$, the value $\sat{\varphi}{\system,\tau} \in [-1,1]$ expresses the \emph{robustness} of $\system$ with respect to $\varphi$ at time $\tau$, i.e., how much the behaviour of $\system$ at time $\tau$ can be modified either while preserving the validity of property $\varphi$ (if $\varphi$ is already satisfied), or in order to obtain it.

\begin{definition}[\logicShort{}: quantitative semantics]
For any system $\system$, time step $\tau$, and \logicShort{} formula $\varphi$, the \emph{robustness} of $\system$ with respect to $\varphi$ at $\tau$, notation $\sat{\varphi}{\system,\tau} \in [-1,1]$, is defined inductively with respect to the structure of $\varphi$ as follows:
\begin{align*}
\sat{\top}{\system,\tau} ={} & 1 \\
\sat{\sinistra{\mu}^\rho_{p}}{\system,\tau} ={} & p -  \lambda(\tau)\Wasserstein(m^\D_{\rho,\tau})(\mu, \ES_{\system,\tau})\\
\sat{\destra{\mu}^\rho_{p}}{\system,\tau} ={} & \lambda(\tau)\Wasserstein(m^\D_{\rho,\tau})(\ES_{\system,\tau},\mu) - p\\
\sat{\neg\varphi}{\system,\tau} ={} & - \sat{\varphi}{\system,\tau} \\
\sat{\varphi_1 \vee \varphi_2}{\system,\tau} ={} & \max\left\{ \sat{\varphi_1}{\system,\tau}, \sat{\varphi_2}{\system,\tau} \right\} \\
\sat{\funtil{\varphi_1 }{[a,b]}{\varphi_2}}{\system,\tau} ={} & \max_{\tau' \in [\tau + a, \tau + b]}\min
\Big\{
\begin{array}{c}
\sat{\varphi_2}{\system,\tau'}, \\
\displaystyle{\min_{\tau'' \in [\tau+a, \tau')} \sat{\varphi_1}{\system,\tau''}}
\end{array}
\Big\}.
\end{align*}
\end{definition}

Intuitively, the value $\lambda(\tau) \Wasserstein(m^\D_{\rho,\tau}) (\mu,\ES_{\system,\tau})$ quantifies the (discounted) difference between the distribution $\ES_{\system,\tau}$ reached by the system $\system$ at time $\tau$ and $\mu$. 
Hence, on the one hand the robustness $\sat{\sinistra{\mu}^\rho_{p}}{\system,\tau}$ expresses whether the distribution in the \traccione{} of $\system$ is within the maximal acceptable hemi-distance $p$ from $\mu$.
On the other hand, it also expresses how much $\ES_{\system,\tau}$ can be modified while guaranteeing that the behaviour of the system remains within the specified parameters.
Clearly, the closer $\mu$ and $\ES_{\system,\rho}$, the higher the robustness.
Similarly, $\sat{\destra{\mu}^\rho_{p}}{\system,\tau}$ quantifies the robustness of $\ES_{\system,\tau}$ with respect to $\mu$ (and $\rho$) in terms of how much $\ES_{\system,\tau}$ may \emph{get close} to $\mu$ while keeping the minimal required hemi-distance $p$.
Hence, the farther $\ES_{\system,\tau}$ and $\mu$, the higher the robustness.
The semantics of boolean connectives and bounded until is standard.
Notice that due to the potential asymmetry of our distances, it is not true in general that $\destra{\mu}^\rho_p = \neg \sinistra{\mu}^\rho_{1-p}$.

As expected, other operators can be defined as macros in our logic:
\[
\begin{array}{ll}
\varphi_1 \wedge \varphi_2 \equiv \neg (\neg\varphi_1 \vee \neg \varphi_2)
\;&\;
\varphi_1 \rightarrow \varphi_2 \equiv \neg\varphi_1 \vee \varphi_2 \\[.1cm]
\fevent{[a,b]}{\varphi} \equiv \funtil{\top}{[a,b]}{\varphi_2}
\;&\;
\fglob{[a,b]}{\varphi} \equiv \neg \fevent{[a,b]}{\neg \varphi}.
\end{array} 
\]

\begin{example}
\label{ex:properties}
\logicShort{} formulae can be used to express requirements on the three-tanks scenario of Example~\ref{ex:scenario}. 
Let $\rho^i$, for $i = 1,2,3$, be the penalty functions introduced in Example~\ref{ex:penalty}.
We can express the following two requirements:
\begin{description}
    \item[Prop1:] After an initial start up period of at most $\tau_1$ steps, for the next $\tau_2$ steps the distribution of $l_3$ is at a distance of at most $p$ from a normal distribution with mean $l_{g}$ and variance $\delta_{l}$:    
    \[
        \fevent{[0,\tau_1]}{\fglob{[0,\tau_2]}{\sinistra{l_3 \sim N(l_{g},\delta_{l})}^{\rho^3}_{p}}}. 
    \]
    
    \item[Prop2:] If in the first $\tau_1$ steps the level experienced in at least one of the three tanks gets too close, i.e., at a distance less than $p_h$, to an \emph{hazardous} normal distribution with mean $l_{M}-\varepsilon$ and variance $\varepsilon$, then in at most $\tau_2$ steps the experienced values in all tanks will be close, i.e. at a distance less than $p_s$, to the \emph{safe} expected distributions:
    \[
    \begin{array}{rcl}
        \varphi &  = & \fglob{[0,\tau_1]}{(\varphi_{h}} \rightarrow \fevent{[0,\tau_2]}{\varphi_{s})}\\[.2cm]
        \varphi_{h} & = &  
        \bigvee_{i=1}^3 \neg \destra{l_i \sim N(l_{M}-\varepsilon, \varepsilon)}^{\rho^i}_{p_h}\\[.2cm]
        \varphi_{s} & = & 
        \bigwedge_{i=1}^3 \sinistra{l_i \sim N(l_{g},\delta_{l})}^{\rho^i}_{p_{s}}.
    \end{array}
    \]      
\end{description}
We remark that neither \textbf{Prop1} nor \textbf{Prop2} can be expressed with classic probabilistic temporal logics.
\end{example}

We can show that \logicShort{} characterises the distance between systems. 
More precisely, the quantitative semantics of \logicShort{} induces a distance between systems that coincides with the symmetrisation of the hemimetric $\m^{\lambda}_{\rho}$ and is therefore a pseudometric. 
Clearly, since the \spell metric is defined in terms of a given penalty function $\rho$, it will be characterised by the distance over formulae in $\logicSymbol_\rho$.

\begin{definition}[\logicShort{} distance]
Given a penalty function $\rho$, the \emph{\logicShort{} distance} over systems $\system_1$ and $\system_2$ with respect to $\rho$ and $\OT$ is defined as
\[
\logicMetric(\system_1, \system_2) = \sup_{\varphi \in \logicSymbol_\rho, \tau \in \OT}
\left| \sat{\varphi}{\system_1,\tau} - \sat{\varphi}{\system_2,\tau} \right|.
\]
\end{definition}

Firstly, we show that the symmetrisation of $\m^{\lambda}_{\rho,\OT}$ is an upper bound to $\logicMetric$.

\begin{restatable}{lemma}{lemsxdx}
\label{lem:sx_dx}
For any penalty function $\rho$, and systems $\system_1$ and $\system_2$ we have that:
\[
\logicMetric(\system_1, \system_2)
\le 
\max(\m^\lambda_{\rho,\OT}(\system_1, \system_2), \m^\lambda_{\rho,\OT}(\system_2, \system_1)).
\]
\end{restatable}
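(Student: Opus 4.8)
The plan is to abbreviate the right-hand side as $M = \max\bigl(\m^\lambda_{\rho,\OT}(\system_1,\system_2), \m^\lambda_{\rho,\OT}(\system_2,\system_1)\bigr)$ and to prove, by structural induction on $\varphi \in \logicSymbol_\rho$, the stronger pointwise bound
\[
\bigl| \sat{\varphi}{\system_1,\tau} - \sat{\varphi}{\system_2,\tau} \bigr| \le M \quad\text{for every } \tau \in \OT.
\]
The lemma then follows immediately by taking the supremum over $\varphi \in \logicSymbol_\rho$ and $\tau \in \OT$ in the definition of $\logicMetric$. Since $M$ does not depend on $\varphi$, the induction is well founded.

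The atomic cases carry essentially all the content. For $\varphi = \top$ the difference is $0$. For $\varphi = \sinistra{\mu}^\rho_{p}$ the tolerance $p$ and the discount factor $\lambda(\tau)$ occur symmetrically, so the difference reduces to $\lambda(\tau)\,\bigl|\Wasserstein(m^\D_{\rho,\tau})(\mu,\ES_{\system_1,\tau}) - \Wasserstein(m^\D_{\rho,\tau})(\mu,\ES_{\system_2,\tau})\bigr|$. Because $\Wasserstein(m^\D_{\rho,\tau})$ is a hemimetric on $\distrib(\D,\borel_\D)$ (recalled after Definition~\ref{def:Wasserstein}), I would apply the triangle inequality in each direction --- inserting $\ES_{\system_2,\tau}$ as the intermediate point when bounding $\sat{\varphi}{\system_1,\tau} - \sat{\varphi}{\system_2,\tau}$ from above, and $\ES_{\system_1,\tau}$ for the reverse difference --- to bound the above by
\[
\lambda(\tau)\max\bigl(\Wasserstein(m^\D_{\rho,\tau})(\ES_{\system_1,\tau},\ES_{\system_2,\tau}),\,\Wasserstein(m^\D_{\rho,\tau})(\ES_{\system_2,\tau},\ES_{\system_1,\tau})\bigr).
\]
By Definition~\ref{def:spell_metric} this is at most $M$, since $\m^\lambda_{\rho,\OT}$ is the weighted maximum over $\OT$ of exactly such Wasserstein distances. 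The case $\varphi = \destra{\mu}^\rho_{p}$ is identical, with the two arguments of $\Wasserstein(m^\D_{\rho,\tau})$ swapped. It is precisely the possible asymmetry of $\Wasserstein(m^\D_{\rho,\tau})$ that forces the bound to involve the maximum of the two directed \spell distances, not just one of them.

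For the inductive step I would use the elementary fact that negation, finite $\min$/$\max$, and $\sup$/$\inf$ over an index set are non-expansive in the uniform norm: if $|f_1(z) - f_2(z)| \le c$ for all $z$ in a set $Z$, then the same bound holds for $-f_i$, for finite $\min$ and $\max$ of such functions, and for $\sup_{z\in Z} f_i$ and $\inf_{z\in Z} f_i$. Applying this to $\varphi = \neg\psi$, to $\varphi = \psi_1 \vee \psi_2$, and to $\varphi = \funtil{\psi_1}{[a,b]}{\psi_2}$ --- whose robustness is assembled from an outer $\max$ over $\tau'$, an inner $\min$ over $\tau''$, and the pointwise $\min$ of the robustness values of $\psi_1$ and $\psi_2$ --- the induction hypothesis for the subformulae at the relevant times delivers the bound $M$ for $\varphi$.

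The only non-routine ingredient is the triangle inequality of the Wasserstein hemimetric, which is available from the background section, so I expect no serious obstacle. The one point needing a little care is the bounded-until case: one must check that all time points $\tau'\in[\tau+a,\tau+b]$ and $\tau''\in[\tau+a,\tau')$ arising in its semantics still lie in $\OT$, so that the induction hypothesis --- stated only for times in $\OT$ --- applies; this is ensured by the well-formedness requirement that $[a,b]$ be an interval in $\OT$.
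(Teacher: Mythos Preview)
Your proposal is correct and follows essentially the same approach as the paper: structural induction on $\varphi$ to establish the pointwise bound $|\sat{\varphi}{\system_1,\tau}-\sat{\varphi}{\system_2,\tau}|\le M$, with the triangle inequality for the Wasserstein hemimetric doing the work in the atomic cases. The only cosmetic difference is that the paper unfolds the inductive steps for $\vee$ and until via explicit case splits, whereas you invoke the non-expansiveness of $\min$/$\max$ in the uniform norm directly; these are the same argument.
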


\begin{proof}
The proof can be found in Appendix~\ref{app:lem_sx_dx}.
\end{proof}

We can provide a formula witnessing that $\logicMetric$ and the symmetrisation of $\m_{\rho,\OT}^{\lambda}$ coincide.

\begin{restatable}{lemma}{lemdistinguishingtau}
\label{lem:distinguishing_tau}
For all systems $\system_1, \system_2$ and penalty functions $\rho$, there is a formula $\varphi \in \logicSymbol_\rho$ with $|\sat{\varphi}{\system_1,\tau} - \sat{\varphi}{\system_2,\tau}| = \max\left( \m^{\lambda}_{\rho,\OT}(\system_1, \system_2), \m^\lambda_{\rho,\OT}(\system_2, \system_1) \right)$,
for some $\tau \in \OT$.
\end{restatable}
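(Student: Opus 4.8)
The plan is to exhibit the witnessing formula explicitly, taking as its target distribution one of the two evolution sequences evaluated at the time step that realises the evolution metric.

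First I would reduce to a single direction. Since the claimed value is the maximum $\max(\m^{\lambda}_{\rho,\OT}(\system_1,\system_2), \m^{\lambda}_{\rho,\OT}(\system_2,\system_1))$ and the two systems play symmetric roles in the statement, I may assume without loss of generality that $\m^{\lambda}_{\rho,\OT}(\system_1,\system_2) \ge \m^{\lambda}_{\rho,\OT}(\system_2,\system_1)$ (otherwise just exchange the names of $\system_1$ and $\system_2$). Because $\OT$ is finite, the maximum in Definition~\ref{def:spell_metric} is attained, so I can fix $\hat\tau \in \OT$ with $\m^{\lambda}_{\rho,\OT}(\system_1,\system_2) = \lambda(\hat\tau)\,\Wasserstein(m^{\D}_{\rho,\hat\tau})(\ES_{\system_1,\hat\tau},\ES_{\system_2,\hat\tau})$.

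Next I would take as witness the atomic formula $\varphi = \sinistra{\ES_{\system_1,\hat\tau}}^{\rho}_{p}$ for any fixed $p \in [0,1]$ (its exact value is immaterial, since it cancels in the difference). This is a legitimate formula of $\logicSymbol_\rho$: indeed $\ES_{\system_1,\hat\tau} \in \distrib(\D,\borel_\D)$ is a distribution over data states, and it is defined on the whole of $\Var$, so $\var{\rho} \subseteq \var{\ES_{\system_1,\hat\tau}}$ holds trivially. Then I would evaluate the robustness at time $\hat\tau$. For $\system_1$, Definition~\ref{def:metric_DS} gives $m^{\D}_{\rho,\hat\tau}(\ds,\ds) = \max\{\rho_{\hat\tau}(\ds)-\rho_{\hat\tau}(\ds),0\} = 0$ for every $\ds \in \D$; hence the pushforward of $\ES_{\system_1,\hat\tau}$ along the diagonal map $\ds \mapsto (\ds,\ds)$ is a coupling of $\ES_{\system_1,\hat\tau}$ with itself of cost $0$, which by Proposition~\ref{prop:m_metrica} and Definition~\ref{def:Wasserstein} forces $\Wasserstein(m^{\D}_{\rho,\hat\tau})(\ES_{\system_1,\hat\tau},\ES_{\system_1,\hat\tau}) = 0$, so $\sat{\varphi}{\system_1,\hat\tau} = p$. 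For $\system_2$ the quantitative semantics gives directly $\sat{\varphi}{\system_2,\hat\tau} = p - \lambda(\hat\tau)\,\Wasserstein(m^{\D}_{\rho,\hat\tau})(\ES_{\system_1,\hat\tau},\ES_{\system_2,\hat\tau})$. Therefore $|\sat{\varphi}{\system_1,\hat\tau} - \sat{\varphi}{\system_2,\hat\tau}| = \lambda(\hat\tau)\,\Wasserstein(m^{\D}_{\rho,\hat\tau})(\ES_{\system_1,\hat\tau},\ES_{\system_2,\hat\tau}) = \m^{\lambda}_{\rho,\OT}(\system_1,\system_2) = \max(\m^{\lambda}_{\rho,\OT}(\system_1,\system_2),\m^{\lambda}_{\rho,\OT}(\system_2,\system_1))$, which is the required equality with $\tau = \hat\tau$.

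There is essentially no hard step once the right idea is in place: the crux is recognising that by placing the evolution sequence's own distribution $\ES_{\system_1,\hat\tau}$ in the target slot of a $\sinistra{\cdot}$ atom, the robustness of $\system_1$ collapses to the constant $p$, so the absolute difference reduces exactly to the discounted Wasserstein distance at $\hat\tau$; finiteness of $\OT$ and the symmetrisation argument then yield the maximum of the two hemimetric values. The only point needing a line of care is the vanishing of $\Wasserstein(m^{\D}_{\rho,\hat\tau})$ on the diagonal, which is immediate from $m^{\D}_{\rho,\hat\tau}$ being a hemimetric together with the diagonal coupling. (Note that the dual atom $\destra{\ES_{\system_1,\hat\tau}}^{\rho}_{p}$ would work equally well, giving $\sat{\cdot}{\system_1,\hat\tau} = -p$.)
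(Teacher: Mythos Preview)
Your proof is correct and follows essentially the same approach as the paper's: choose the time $\hat\tau$ realising the evolution metric, set $\mu=\ES_{\system_1,\hat\tau}$ in a $\sinistra{\cdot}$ atom, and exploit that the Wasserstein hemimetric vanishes on the diagonal (the paper fixes a particular $p$, but as you observe it cancels in the difference). One small caveat: your closing parenthetical about $\destra{\ES_{\system_1,\hat\tau}}^{\rho}_{p}$ working ``equally well'' is not quite right, since the resulting difference would be $\lambda(\hat\tau)\Wasserstein(m^{\D}_{\rho,\hat\tau})(\ES_{\system_2,\hat\tau},\ES_{\system_1,\hat\tau})$, i.e., the hemimetric in the \emph{opposite} direction at $\hat\tau$, which in general does not equal the maximum you need.
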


\begin{proof}
The proof can be found in Appendix~\ref{app:lem_distinguishing_tau}.
\end{proof}

From Lemma~\ref{lem:sx_dx} and Lemma~\ref{lem:distinguishing_tau} we infer that the \logicShort{} distance $\logicMetric$ and the symmetrisation of $\m^\lambda_{\rho,\OT}$ coincide.

\begin{theorem}
\label{thm:sx_dx_and_thm:distinguishing_tau}
For all systems $\system_1$ and $\system_2$ we have that:
\[
\logicMetric(\system_1, \system_2)
= 
\max(\m^\lambda_{\rho,\OT}(\system_1, \system_2), \m^\lambda_{\rho,\OT}(\system_2, \system_1)).
\]
\end{theorem}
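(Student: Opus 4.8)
The plan is to obtain the identity by combining the two preceding lemmas, which between them pin down both inequalities, so the theorem itself requires essentially no new work.

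First, the direction $\logicMetric(\system_1, \system_2) \le \max(\m^\lambda_{\rho,\OT}(\system_1,\system_2), \m^\lambda_{\rho,\OT}(\system_2,\system_1))$ is exactly the statement of Lemma~\ref{lem:sx_dx}. For the converse, I would invoke Lemma~\ref{lem:distinguishing_tau}, which supplies a formula $\varphi \in \logicSymbol_\rho$ and a time $\tau \in \OT$ with $|\sat{\varphi}{\system_1,\tau} - \sat{\varphi}{\system_2,\tau}| = \max(\m^\lambda_{\rho,\OT}(\system_1,\system_2), \m^\lambda_{\rho,\OT}(\system_2,\system_1))$. Since $\logicMetric(\system_1,\system_2)$ is by definition the supremum of $|\sat{\psi}{\system_1,\tau'} - \sat{\psi}{\system_2,\tau'}|$ over all $\psi \in \logicSymbol_\rho$ and $\tau' \in \OT$, this particular pair $(\varphi,\tau)$ witnesses $\logicMetric(\system_1,\system_2) \ge \max(\m^\lambda_{\rho,\OT}(\system_1,\system_2), \m^\lambda_{\rho,\OT}(\system_2,\system_1))$. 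The two bounds together give equality, and they hold for every $\rho$, so the theorem follows.

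The real content lies inside the two lemmas rather than in this theorem. For Lemma~\ref{lem:sx_dx} I would proceed by structural induction on $\varphi$, showing each connective is non-expansive with respect to the symmetrised evolution metric: the atomic cases $\sinistra{\mu}^\rho_p$ and $\destra{\mu}^\rho_p$ reduce to the triangle inequality for $\Wasserstein(m^\D_{\rho,\tau})$ applied to $\ES_{\system_1,\tau}$, $\ES_{\system_2,\tau}$ and $\mu$, together with $\lambda(\tau) \le 1$; negation preserves absolute differences; and disjunction and the bounded until inherit non-expansiveness from $\max$ and $\min$. For Lemma~\ref{lem:distinguishing_tau} I would pick $\tau^\ast \in \OT$ realising the maximum in Definition~\ref{def:spell_metric}, say for $\m^\lambda_{\rho,\OT}(\system_1,\system_2)$, and take the atomic formula $\sinistra{\ES_{\system_1,\tau^\ast}}^\rho_{p}$ with $p$ chosen so the robustness of $\system_1$ at $\tau^\ast$ vanishes; then $\sat{\varphi}{\system_1,\tau^\ast} - \sat{\varphi}{\system_2,\tau^\ast} = \lambda(\tau^\ast)\Wasserstein(m^\D_{\rho,\tau^\ast})(\ES_{\system_1,\tau^\ast},\ES_{\system_2,\tau^\ast}) = \m^\lambda_{\rho,\OT}(\system_1,\system_2)$, with a symmetric choice (swapping the arguments, or using $\destra{\cdot}^\rho_{\cdot}$) when the other term attains the maximum. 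The main obstacle is bookkeeping around the asymmetry of the hemimetric: ensuring the chosen threshold $p$ stays in $[0,1]$ and that the absolute value lands on the intended side.
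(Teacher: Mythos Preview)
Your proposal is correct and follows essentially the same approach as the paper: the theorem is obtained by combining Lemma~\ref{lem:sx_dx} (upper bound, by structural induction with the triangle inequality for the Wasserstein hemimetric in the atomic cases) and Lemma~\ref{lem:distinguishing_tau} (lower bound, by exhibiting an atomic formula with $\mu=\ES_{\system_1,\tau^\ast}$ at the time $\tau^\ast$ realising the maximum). The only cosmetic difference is that the paper sets $p = \m^\lambda_{\rho,\OT}(\system_2,\system_1)$ rather than $p=0$, but either choice yields the same absolute difference and lies in $[0,1]$, so your bookkeeping worry is harmless.
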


Theorem~\ref{thm:sx_dx_and_thm:distinguishing_tau} entails the \emph{soundness} (Lemma~\ref{lem:sx_dx}) and \emph{completeness} (Lemma~\ref{lem:distinguishing_tau}) of our notion of robustness.
In particular, as a direct consequence of Theorem~\ref{thm:sx_dx_and_thm:distinguishing_tau}, we can obtain the following classic result (see, e.g., \cite{DM10}): whenever the robustness 
of a system $\system$ with respect to a formula $\varphi$ is greater than the distance between $\system$ and $\system'$, then the robustness of $\system'$ with respect to $\varphi$ is positive as well.

\begin{corollary}
\label{cor:sx_dx}
Let $\varphi$ be any formula in $\logicSymbol_\rho$, $\tau \in \OT$ and let $i \in \{1,2\}$.
Whenever $\sat{\varphi}{\system_i,\tau} \ge \max( \m^\lambda_{\rho,\OT} (\system_1, \system_2), \m^\lambda_{\rho,\OT}(\system_2, \system_1))$, then $\sat{\varphi}{\system_{3-i},\tau} \ge 0$.
\end{corollary}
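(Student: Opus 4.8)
The plan is to derive the statement directly from Theorem~\ref{thm:sx_dx_and_thm:distinguishing_tau} together with the definition of the \logicShort{} distance $\logicMetric$. The key observation is that the quantity $\max(\m^\lambda_{\rho,\OT}(\system_1,\system_2), \m^\lambda_{\rho,\OT}(\system_2,\system_1))$ appearing in the hypothesis is, by that theorem, exactly $\logicMetric(\system_1,\system_2)$, and that $\logicMetric(\system_1,\system_2)$ is by construction an upper bound on the absolute difference of the robustness values of $\system_1$ and $\system_2$ on every formula of $\logicSymbol_\rho$ at every observation time in $\OT$.

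Concretely, I would fix $i \in \{1,2\}$, a formula $\varphi \in \logicSymbol_\rho$, and $\tau \in \OT$, and assume $\sat{\varphi}{\system_i,\tau} \ge \max(\m^\lambda_{\rho,\OT}(\system_1,\system_2), \m^\lambda_{\rho,\OT}(\system_2,\system_1))$. By Theorem~\ref{thm:sx_dx_and_thm:distinguishing_tau} the right-hand side equals $\logicMetric(\system_1,\system_2)$. Since $\logicMetric(\system_1,\system_2)$ is defined as the supremum over all $\psi \in \logicSymbol_\rho$ and all $\tau' \in \OT$ of $|\sat{\psi}{\system_1,\tau'} - \sat{\psi}{\system_2,\tau'}|$, and since $\varphi \in \logicSymbol_\rho$ and $\tau \in \OT$, it in particular dominates $|\sat{\varphi}{\system_1,\tau} - \sat{\varphi}{\system_2,\tau}| = |\sat{\varphi}{\system_i,\tau} - \sat{\varphi}{\system_{3-i},\tau}|$. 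A single use of the (reverse) triangle inequality then yields
\[
\sat{\varphi}{\system_{3-i},\tau} \ge \sat{\varphi}{\system_i,\tau} - \big|\sat{\varphi}{\system_i,\tau} - \sat{\varphi}{\system_{3-i},\tau}\big| \ge \sat{\varphi}{\system_i,\tau} - \logicMetric(\system_1,\system_2) \ge 0,
\]
which is the claim. Because the argument treats $\system_i$ and $\system_{3-i}$ symmetrically, handling $i \in \{1,2\}$ uniformly costs nothing.

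There is essentially no obstacle here: the corollary merely repackages Theorem~\ref{thm:sx_dx_and_thm:distinguishing_tau} and unravels in two lines. The only points to state carefully are that $\varphi \in \logicSymbol_\rho$ and $\tau \in \OT$ (both part of the hypotheses), since these are exactly what make the bound $|\sat{\varphi}{\system_1,\tau} - \sat{\varphi}{\system_2,\tau}| \le \logicMetric(\system_1,\system_2)$ applicable; without the restriction to $\logicSymbol_\rho$ one could not invoke the characterisation, as the \spell metric $\m^\lambda_{\rho,\OT}$ is tied to the specific penalty function $\rho$.
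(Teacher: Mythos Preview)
Your argument is correct and matches the paper's own treatment, which simply records the corollary as a direct consequence of Theorem~\ref{thm:sx_dx_and_thm:distinguishing_tau}. One small remark: you only use the inequality $|\sat{\varphi}{\system_1,\tau} - \sat{\varphi}{\system_2,\tau}| \le \max(\m^\lambda_{\rho,\OT}(\system_1,\system_2), \m^\lambda_{\rho,\OT}(\system_2,\system_1))$, which is already Lemma~\ref{lem:sx_dx} (soundness); the completeness half of Theorem~\ref{thm:sx_dx_and_thm:distinguishing_tau} is not needed here.
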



\section{Statistical Model Checking}
\label{sec:statisticalmc}

In this section we present an algorithm, based on \emph{statistical model-checking}, that allows us to estimate the robustness of a system $\system$ with respect to a formula $\varphi$.
This algorithm consists of three basic elements:
\begin{inparaenum}[(i)]
\item 
a randomised procedure that, based on simulation, permits the estimation of the \traccione{} of $\system$, assuming an initial \datastate{} $\ds_\system$; 
\item 
a mechanism to estimate the Wasserstein distance between two probability distributions on $(\D,\borel_\D)$; 
\item 
a procedure that by inspecting the syntax of $\varphi$ and by using the first two components computes the robustness.
\end{inparaenum}

Due to lack of space, we only present an overview of the three steps.
The algorithms we are going to discuss are reported in Appendix~\ref{sec:algorithms}.
A Python implementation of the proposed approach is available at {\small\color{blue}\url{https://github.com/gitUltron/Ultron}}.

In Section~\ref{sec:requirements} we have introduced robustness in its general form, i.e., by presenting its evaluation in a formula \emph{at any time} $\tau$.
However, the simulation of the \traccione{} of a system $\system$ can only be done starting from an initial \datastate{} $\ds_\system$ (or at least from a given finite, discrete distribution over \datastates{}), and thus, ideally, from time $0$.
Hence, it is natural, and also common practice, to evaluate the robustness of $\system$ with respect to a formula \emph{always at time} $0$.
Clearly, the temporal operators in \logicShort{} still allow us to reason on timed requirements on \tracciones.
Therefore, we will henceforth consider, for a system $\system$ and a formula $\varphi$, the robustness $\sat{\varphi}{\system} = \sat{\varphi}{\system,0}$.


\subsection{Statistical estimation of \tracciones{}}
\label{sec:stat_estimation}

\begin{figure*}[t]
\begin{subfigure}{0.5\textwidth}
\includegraphics[scale=0.3]{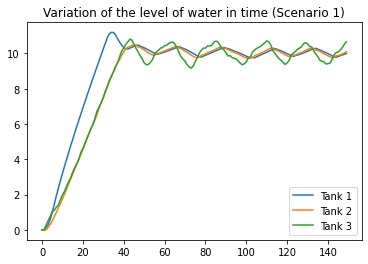}
\includegraphics[scale=0.3]{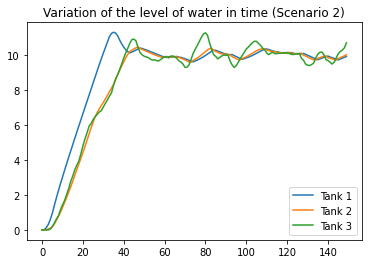}
\caption{\label{fig:simulationresult} Simulation results.}
\end{subfigure}
\hfill
\begin{subfigure}{0.55\textwidth}
\includegraphics[scale=0.3]{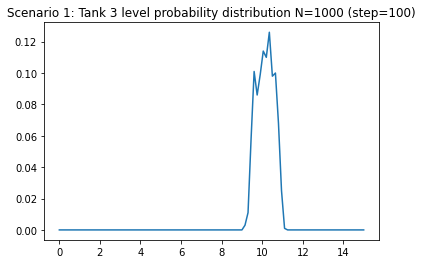}
\includegraphics[scale=0.3]{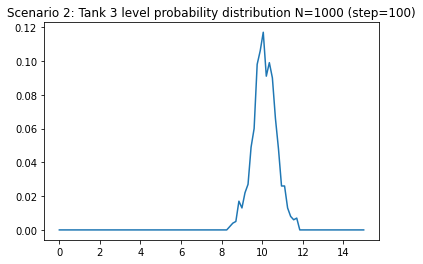}
\caption{\label{fig:probestimation} Estimated distributions of the water level in tank 3.}
\end{subfigure}
\caption{The three-tanks experiment in the two scenarios from Example~\ref{ex:scenario}.}
\end{figure*}

Given an initial \datastate{} $\ds_\system$ and an integer $k$, we let $\Call{Simul}{}$ be the function used to sample \emph{a sequence of \datastates{}} of form $\ds_0, \ds_1,\ldots, \ds_k$, modelling $k$-steps of a computation starting from $\ds_\system=\ds_0$.
$\Call{Simul}{}$ is defined assuming that for any index $0 \le i \le k-1$ and measurable set $\mesD \in \borel_{\D}$, the likelihood to sample $\ds_{i+1}\in \mesD$ after the sampling of $\ds_0, \ldots, \ds_i$ corresponds to $\cstep(\ds_i)(\mesD)$ (details in Appendix~\ref{sec:algorithms}). 

\begin{example}
\label{ex:simulation}	
A simulation of the three-tanks laboratory experiment is given in Figure~\ref{fig:simulationresult}, with the following setting:
\begin{inparaenum}[(i)]
\item $l_{m} = 0$,
\item $l_{M} = 20$,
\item $l_{g} = 10$,
\item $\delta_l = 0.5$,
\item $q_M = 6$,
\item $q_s = q_M / 5$,
\item $\delta_q = 0.5$,
\item $\Delta\tau = 0.1$.
\end{inparaenum}
On the left hand side we can see a single simulation run related to scenario 1, namely the one in which the flow rate $q_2$ of the incoming pipe is regulated by Equation~\eqref{eq:eq_q1_option1}. 
On the right hand side, we report a run of scenario 2, in which the variation of that rate is modelled as Equation~\eqref{eq:eq_q1_option2}.
In both cases, we assume an initial \datastate{} $\ds_{\system}$ with $l_i = l_m$ and $q_i = 0$, for $i = 1,2,3$.
(The parameters related to the evaluation of $q_{12}$ and $q_{23}$ in our simulations can be found in Appendix~\ref{app:flow_rate_details}.)
\end{example}

We then use a function, called $\Call{Estimate}{}$ in Appendix~\ref{sec:algorithms}, to obtain the empirical \traccione{} of system $\system$ starting from $\ds_\system$.
Intuitively, this function uses function $\Call{Simul}{}$ to obtain $N$ sampled sequences of \datastates{} $\ds_0^{j},\dots,\ds_k^{j}$, for $j=1,\dots,N$, from $\ds_\system=\ds_0^j$.
Then, a sequence of sets of samples $E_0,\ldots,E_k$ is computed, where each $E_i$ is the tuple $\ds_i^{1},\ldots,\ds_i^{N}$ of the \datastates{} observed at time $i$ in each of the $N$ sampled computations. 
Notice that, for each $i \in \{0,\dots,k\}$, the samples $\ds_i^1,\dots,\ds_i^N$ are independent and identically distributed (see Appendix~\ref{sec:algorithms}).
Each $E_i$ can be used to estimate the distribution $\ES_{\system,i}$. 
For any $i$, with $0\leq i\leq k$, we let $\hat{\ES}_{\system,i}^{N}$ be the distribution such that for any measurable set $\mesD \in \borel_{\D}$ we have
$
\displaystyle{\hat{\ES}_{\system,i}^{N}(\mesD)=\frac{|E_i \cap \mesD|}{N}}.
$
Then, by applying the weak law of large numbers to the i.i.d samples, we get that $\hat{\ES}_{\system,i}^{N}$ converges weakly to $\ES_{\system,i}$ when $N \to \infty$:
\begin{equation}
\label{eq:weak_convergence}
\lim_{N\rightarrow \infty}\hat{\ES}_{\system,i}^{N} = \ES_{\system,i}
\end{equation}

\begin{example}
\label{ex:traccione_estimation}	
In Figure~\ref{fig:probestimation} we give an estimation of the distribution of $l_3$, after $100$ time steps for $N = 1000$ samples, for the two scenarios from Example~\ref{ex:scenario}.
\end{example}

As usual in the related literature, we can apply the classic \emph{standard error approach} to analyse the approximation error of our statistical estimation of the evolution sequences.
Briefly, we let $x \in \{l_1,l_2,l_3\}$ and we focus on 
the distribution of the means of our samples (in particular we consider the cases $N = 100, 500, 1000, 5000, 10000$) for each variable $x$.
In each case, for each $i \in \{0,\dots,k\}$ we compute the mean $\overline{E_i}(x) = \frac{1}{N}\sum_{j = 1}^N \ds_i^j(x)$ of the sampled data, and we evaluate their \emph{standard deviation} $\tilde{\sigma}_{i,N}(x) = \sqrt{\frac{\sum_{j = 1}^{N} (\ds_i^j(x) - \overline{E_i}(x))^2}{N - 1}}$ (see Figure~\ref{fig:standard_deviation} for the variation in time of the standard deviation of the distribution of $x = l_3$).
From $\tilde{\sigma}_{i,N}(x)$ we obtain the \emph{standard error of the mean} $\overline{\sigma}_{i,N}(x) = \frac{\tilde{\sigma}_{i,N}(x)}{\sqrt{N}}$ (see Figure~\ref{fig:standard_error} for the variation in time of the standard error for the distribution of $x = l_3$).
Finally, we proceed to compute the $z$-\emph{score} of our sampled distribution as follows: $z_{i,N}(x) = \frac{\overline{E_i}(x) - \mathbb{E}(x)}{\overline{\sigma}_{i,N}(x)}$, where $\mathbb{E}(x)$ is the mean (or expected value) of the real distribution over $x$. 
In Figure~\ref{fig:z_score} we report the variation in time of the $z$-score of the distribution over $x = l_3$: the dashed red lines correspond to $z = \pm 1.96$, namely the value of the $z$-score corresponding to a confidence interval of the $95\%$. 
We can see that our results can already be given with a $95\%$ confidence in the case of $N = 1000$ (for readability, in Figure~\ref{fig:z_score} we have reported only the values related to $N = 100, 1000, 10000$).
Please notice that the oscillation in time of the values of the $z$-scores is due to the perturbations introduced by the environment in the simulations and by the natural oscillation in the interval $[l_{g}-\delta_l,l_{g}+\delta_l]$ of the water levels in the considered experiment (see Figure~\ref{fig:simulationresult}).
A similar analysis, with analogous results, can be carried out for the distributions of $l_1$ and $l_2$.
In Figure~\ref{fig:total_z_score} we report the variation in time of the $z$-scores of the distributions of the three variables, in the case $N = 1000$.

\begin{figure*}
\begin{subfigure}{0.48\textwidth}
\includegraphics[scale=0.35]{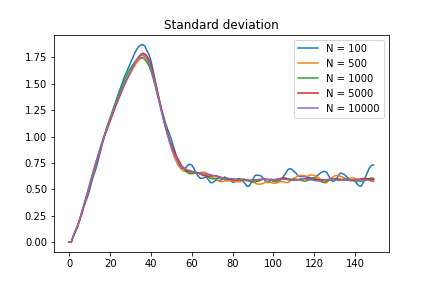}
\caption{Standard deviation for $l_3$.}
\label{fig:standard_deviation}
\end{subfigure}
\hfill
\begin{subfigure}{0.48\textwidth}
\includegraphics[scale=0.35]{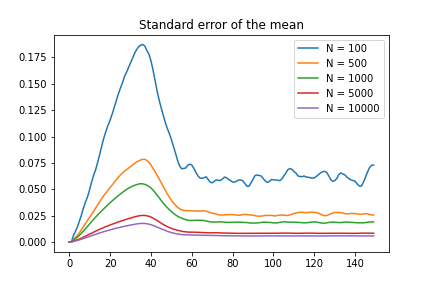}
\caption{Standard error for $l_3$.}
\label{fig:standard_error}
\end{subfigure}
\hfill
\begin{subfigure}{0.48\textwidth}
\includegraphics[scale=0.35]{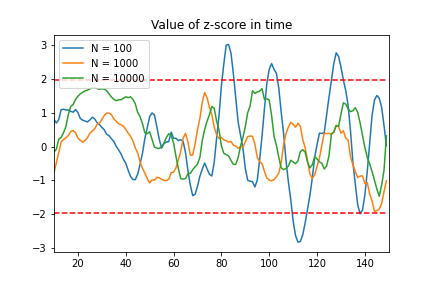}
\caption{$z$-score for $l_3$.}
\label{fig:z_score}
\end{subfigure}
\hfill
\begin{subfigure}{0.48\textwidth}
\includegraphics[scale=0.35]{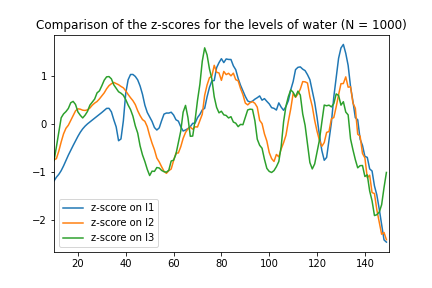}
\caption{$z$-scores for $l_1,l_2,l_3$, $N = 1000$.}
\label{fig:total_z_score}
\end{subfigure}
\caption{Analysis of the approximation error, over time, of the distributions of $l_1,l_2,l_3$, in scenario 2, for $N = 100,500,1000,5000,10000$.}
\label{fig:statistical_error}
\end{figure*}


\subsection{Statistical estimation of the Wasserstein metric}
\label{sec:computing_distance}

Let us consider two distributions $\mu$ and $\nu$ on $(\D,\borel_{\D})$.
Following an approach similar to the one presented in~\cite{TK09}, to estimate the Wasserstein distance $\Wasserstein(m^{\D}_{\rho,i})$ between (the unknown) $\mu$ and $\nu$ we can use $N$ independent samples $\{\ds^1_1,\ldots,\ds^N_1\}$ taken from $\mu$ and $\ell N$ independent samples $\{\ds^1_2,\ldots,\ds_2^{\ell N}\}$ taken from $\nu$. 
We then exploit the $i$-penalty function $\rho_i$ to map each sampled \datastate{} onto $\real$, so that it is enough to consider the sequences of values for the evaluation of the distance $\{ \omega_j=\rho_i(\ds_1^j) \}$ and $\{\nu_h=\rho_i(\ds_2^h) \}$.
We can assume, without loss of generality, that these sequences are ordered, i.e., $\omega_j\leq \omega_{j+1}$ and $\nu_{h}\leq \nu_{h+1}$. 
The value $\Wasserstein(m^{\D}_{\rho,i})(\nu,\mu)$ can be approximated as: 
\[
\frac{1}{\ell N}\sum_{h=1}^{\ell N}\max\{\nu_{h} - \omega_{\lceil \frac{h}{\ell}\rceil},0\}.
\]
The next theorem, based on results in \cite{TK09,Vil08}, ensures that the larger the number of samplings the closer the gap between the estimated value and the exact one.

\begin{restatable}{theorem}{thmestimate}
\label{thm:estimate}
Let $\mu,\nu \in \distrib(\D,\borel_\D)$ be unknown.
Let $\{\ds^1_1,\ldots,\ds^N_1\}$ be independent samples taken from $\mu$, and $\{\ds^1_2,\ldots,\ds_2^{\ell \cdot N}\}$ independent samples taken from $\nu$. 
Let $\{ \omega_j = \rho_i(\ds^j_1)\}$ and $\{ \nu_h = \rho_i(\ds^h_2)\}$ be the ordered sequences obtained by applying the $i$-penalty function to the samples.
Then, it holds, almost surely, that
\[
\Wasserstein(m^{\D}_{\rho,i})(\mu,\nu)
=
\lim_{N\rightarrow \infty} \frac{1}{\ell N}\sum_{h=1}^{\ell N}\max\left\{\nu_{h} - \omega_{\lceil \frac{h}{\ell}\rceil},0\right\}.
\]
\end{restatable}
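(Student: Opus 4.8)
The plan is to push everything forward along the (continuous) $i$-penalty function $\rho_i\colon\D\to[0,1]$, reducing the statement to a one-dimensional estimation problem, and then to recognise the empirical quantity on the right-hand side as an \emph{exact} Wasserstein distance between empirical measures on $[0,1]$, whose almost sure convergence is classical.

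First I would set $\mu' = \mu\circ\rho_i^{-1}$ and $\nu' = \nu\circ\rho_i^{-1}$, the image (distribution) measures of $\mu$ and $\nu$ under $\rho_i$ in the sense of Section~\ref{sec:background}, and let $d\colon[0,1]\times[0,1]\to[0,1]$ be $d(x,y)=\max\{y-x,0\}$. Since $m^\D_{\rho,i}(\ds_1,\ds_2)=d(\rho_i(\ds_1),\rho_i(\ds_2))$, the cost on $\D$ is the pull-back of $d$ along $\rho_i$, and the first key step is the identity
\[
\Wasserstein(m^\D_{\rho,i})(\mu,\nu) \;=\; \Wasserstein(d)(\mu',\nu').
\]
The inequality ``$\ge$'' is immediate, as the image of any coupling $\w\in\W(\mu,\nu)$ under $\rho_i\times\rho_i$ is a coupling of $\mu'$ and $\nu'$ with the same integral of the cost. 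For ``$\le$'' I would invoke the gluing/disintegration lemma for Polish spaces ($\D$ and $\real$ are Polish): disintegrating $\mu$ and $\nu$ over $\rho_i$ and gluing a given coupling of $(\mu',\nu')$ with those disintegrations produces a coupling of $(\mu,\nu)$ that projects onto it and has the same cost. This argument is purely measure-theoretic and never uses symmetry, so it carries over verbatim to the hemimetric setting, all the Wasserstein hemimetrics in play being well defined by Propositions~4 and~6 of \cite{FR18}.

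Second, after applying $\rho_i$ the data are reduced to i.i.d.\ real samples: $\omega_1\le\dots\le\omega_N$ are the order statistics of $N$ i.i.d.\ draws from $\mu'$, and $\nu_1\le\dots\le\nu_{\ell N}$ those of $\ell N$ i.i.d.\ draws from $\nu'$. Writing $\hat\mu_N=\frac1N\sum_{j=1}^N\delta_{\omega_j}$ and $\hat\nu_{\ell N}=\frac1{\ell N}\sum_{h=1}^{\ell N}\delta_{\nu_h}$ for the empirical measures, the comonotone coupling transports mass $\frac1{\ell N}$ from each $\nu_h$ to $\omega_{\lceil h/\ell\rceil}$ (the block $\nu_{(j-1)\ell+1},\dots,\nu_{j\ell}$ being matched to $\omega_j$); since on the line the comonotone coupling is optimal for any convex cost $h(y-x)$ (here $h(z)=\max\{z,0\}$; cf.\ \cite{Vil08}), this gives
\[
\Wasserstein(d)(\hat\mu_N,\hat\nu_{\ell N}) \;=\; \frac1{\ell N}\sum_{h=1}^{\ell N}\max\{\nu_h-\omega_{\lceil h/\ell\rceil},0\},
\]
exactly the estimator in the statement. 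It then remains to prove $\Wasserstein(d)(\hat\mu_N,\hat\nu_{\ell N})\to\Wasserstein(d)(\mu',\nu')$ almost surely. The empirical c.d.f.s converge uniformly to the true ones a.s.\ by Glivenko--Cantelli; since on $[0,1]$ the one-dimensional Wasserstein distance $\Wasserstein(e)$ (with $e(x,y)=|x-y|$) equals the $L^1$-distance of c.d.f.s and is thus dominated by their uniform distance, $\Wasserstein(e)(\hat\mu_N,\mu')$ and $\Wasserstein(e)(\hat\nu_{\ell N},\nu')$ tend to $0$ a.s.; and since $d\le e$ on $[0,1]^2$, all four of $\Wasserstein(d)(\hat\mu_N,\mu')$, $\Wasserstein(d)(\mu',\hat\mu_N)$, $\Wasserstein(d)(\hat\nu_{\ell N},\nu')$ and $\Wasserstein(d)(\nu',\hat\nu_{\ell N})$ tend to $0$ a.s.\ as well. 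Two applications of the (asymmetric) triangle inequality for $\Wasserstein(d)$ then squeeze $\Wasserstein(d)(\hat\mu_N,\hat\nu_{\ell N})$ between $\Wasserstein(d)(\mu',\nu')\pm o(1)$, which together with the first step yields the theorem.

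The step I expect to be the main obstacle is the reduction identity above: one has to justify the lifting of a one-dimensional coupling through $\rho_i$ via the gluing lemma, and to be careful that every classical one-dimensional tool used afterwards (optimality of the monotone coupling, Glivenko--Cantelli, the $L^1$ c.d.f.\ formula for the one-dimensional Wasserstein distance) is applied to the asymmetric cost $d$ rather than to a genuine metric. The combinatorial identification of the comonotone coupling between empirical measures of sizes $N$ and $\ell N$ is routine but must be stated precisely, since the factor $\ell$ is exactly what produces the index $\lceil h/\ell\rceil$.
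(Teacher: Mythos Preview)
Your argument is correct and takes a route that differs from the paper's in two places. First, where you explicitly establish the reduction identity $\Wasserstein(m^\D_{\rho,i})(\mu,\nu)=\Wasserstein(d)(\mu',\nu')$ via disintegration and the gluing lemma, the paper short-circuits this step by directly invoking the quantile-integral formula of \cite[Proposition~6.2]{FR18}, which already packages the reduction to $\real$ through the penalty-induced c.d.f.\ $F_{\mu,\rho_i}(r)=\mu(\{\rho_i(\ds)<r\})$; the paper then obtains the estimator by partitioning $[0,1]$ into $\ell N$ intervals of length $1/(\ell N)$ on which the quantile functions are constant, which is arithmetically the same as your comonotone-coupling computation. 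Second, for the convergence step you work on $[0,1]$ and combine Glivenko--Cantelli, the $L^1$ c.d.f.\ formula for $\Wasserstein(e)$, the domination $d\le e$, and two applications of the hemimetric triangle inequality; the paper instead stays on $\D$ and appeals to \cite[Definition~6.8 and Theorem~6.9]{Vil08} (weak convergence in the Wasserstein sense, using that $\rho_i$ is continuous and bounded and that a $\tilde{\ds}$ with $\rho_i(\tilde{\ds})=0$ exists) to conclude $\Wasserstein(m^\D_{\rho,i})(\hat\mu^N,\hat\nu^{\ell N})\to\Wasserstein(m^\D_{\rho,i})(\mu,\nu)$ directly. Your approach is more self-contained and keeps the asymmetry of the cost visible throughout, at the price of the disintegration argument; the paper's is shorter once one accepts the two cited results as black boxes.
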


\begin{proof}
The proof can be found in Appendix~\ref{app:thm_estimate}.
\end{proof}

In Appendix~\ref{sec:algorithms}, the function that realises the procedure outlined above is function $\Call{ComputeWass}{}$. 
Since the penalty function allows us to reduce the evaluation of the Wasserstein distance in $\real^n$ to its evaluation on $\real$, due to the sorting of $\{\nu_h \mid h \in [1,\dots,\ell N]\}$ the complexity of outlined procedure is $O(\ell N \log(\ell N))$ (cf.\ \cite{TK09}).
We refer the interested reader to \cite[Corollary 3.5, Equation (3.10)]{SFGSL12} for an estimation of the approximation error given by the evaluation of the Wasserstein distance over $N$ samples.


\subsection{Statistical estimation of robustness}

\begin{figure}
\small
\begin{algorithmic}[1]
\Function{Sat}{$\ds_\system,\varphi,\ell,N$} 
\State $k\gets \Call{Horizon}{\varphi}$
\State $E_0,\ldots, E_k\gets  \Call{Estimate}{\ds_\system,k,\ell N}$
\State \Return \Call{Eval}{$\{E_0,\ldots, E_k\},\varphi,\ell,N$}
\EndFunction
\end{algorithmic}
\caption{Function used to evaluate system robustness w.r.t.\ a formula.}
\label{alg:computesat}
\end{figure}

The computation of the robustness of a system $\system$ with respect to a formula $\varphi$, starting from the \datastate{} $\ds_\system$, is performed via the function $\Call{Sat}{}$ defined in Figure~\ref{alg:computesat}. 
Together with the \datastate{} $\ds_\system$ and the formula $\varphi$, function $\Call{Sat}{}$ takes as parameters the two integers $\ell$ and $N$ identifying the number of samplings that will be used to estimate the Wasserstein metric. 
This function consists of three steps. 
First the \emph{time horizon} $k$ of the formula $\varphi$ is computed (by induction on the structure of $\varphi$) to identify the number of steps needed to evaluate the robustness. 
In the second step, function $\Call{Estimate}{}$ is used to simulate the \traccione{} of $\system$ from $\ds_\system$ by collecting the 
sets of samplings $E_1,\ldots,E_k$ on which the robustness is computed, in the third step, by calling function $\Call{Eval}{}$ defined in Figure~\ref{alg:function_eval}.
\begin{figure}
\small
\begin{algorithmic}[1]
\Function{Eval}{$\{E_0,\ldots,E_{k}\},\varphi,\ell, N$} 
\Switch{$\varphi$}
\Case{$\top$} 
\State $\forall 0\leq i\leq k:v_i \gets 1.0$
\State \Return $v_0,\ldots,v_k$
\EndCase
\Case{$\sinistra{\mu}^\rho_{p}$}
\State $\forall 0\leq i\leq k: v_i \gets 0.0$
\State $i\gets 0$
\While{$i\leq k$}
\State $E' \gets \Call{Sample}{\mu,N}$
\State $v_i\gets p-\Call{ComputeWass}{E',E_i,\rho}$
\EndWhile
\State \Return $v_0,\ldots,v_k$
\EndCase
\Case{$\destra{\mu}^\rho_{p}$}
\State $\forall 0\leq i\leq k: v_i \gets 0.0$
\State $i\gets 0$
\While{$i\leq k$}
\State $E' \gets \Call{Sample}{\mu,\ell N}$
\State $v_i\gets \Call{ComputeWass}{E_i \downarrow N, E',\rho} - p$
\EndWhile
\State \Return $v_0,\ldots,v_k$
\EndCase
\Case{$\varphi_1\vee \varphi_2$} 
\State $v^1_0,\ldots,v^1_k=\Call{Eval}{\{E_0,\ldots,E_{k}\},\varphi_1, \ell,N}$
\State $v^2_0,\ldots,v^2_k=\Call{Eval}{\{E_0,\ldots,E_{k}\},\varphi_2, \ell,N}$
\State $\forall 0\leq i\leq k: v_i \gets \max\{ v^1_i,v^2_i \}$
\State \Return $v_0,\ldots,v_k$
\EndCase
\Case{$\neg\varphi_1$} 
\State $v^1_0,\ldots,v^1_k=\Call{Eval}{\{E_0,\ldots,E_{k}\},\varphi_1, \ell,N}$
\State $\forall 0\leq i\leq k: v_i \gets -v^1_i$
\State \Return $v_0,\ldots,v_k$
\EndCase
\Case{$\funtil{\varphi_1}{[a,b]}{\varphi_2}$}
\State $v^1_0,\ldots,v^1_k=\Call{Eval}{\{E_0,\ldots,E_{k}\},\varphi_1, \ell,N}$
\State $v^2_0,\ldots,v^2_k=\Call{Eval}{\{E_0,\ldots,E_{k}\},\varphi_2, \ell,N}$
\State \Return $\Call{Until}{v^1_0,\ldots,v^1_k~,~v^2_0,\ldots,v^2_k,a,b}$
\EndCase
\EndSwitch
\EndFunction
\end{algorithmic}	
\caption{Evaluation of robustness.}
\label{alg:function_eval}
\end{figure}
The structure of $\Call{Eval}{}$ is similar to the monitoring function for STL defined in~\cite{MN04}.
Given $\ell N$ sampled values at time $0,\ldots,k$, a formula $\varphi$ and integers $\ell$ and $N$, function $\Call{Eval}{}$ yields a tuple of the form $v_0,\ldots,v_k$, where $v_i$ is the robustness with respect to $\varphi$ at time step $i$. 
Function $\Call{Eval}{}$ is defined recursively on the syntax of $\varphi$.
If $\varphi=\top$ all the elements in the resulting tuple are equal to $1$, since $\top$ is always satisfied with 
robustness $1$. 
When $\varphi$ is $\sinistra{\mu}^\rho_{p}$ (resp. $\destra{\mu}^\rho_{p}$) the value $v_i$ is computed by first sampling $N$ (resp. $\ell N$) values of $\mu$ via the  sampling function $\Call{Sample}{}$, and then using function $\Call{ComputeWass}{}$ introduced in Section~\ref{sec:computing_distance}.
Function $\Call{Sample}{}$, given a probability distribution $\mu$ and an integer $N$, yields $N$ independent samplings of $\mu$. 
In case of $\destra{\mu}^\rho_{p}$, only $N$ elements are selected from $E_i$ (denoted by $E_i\!\downarrow\! N$).
The robustness with respect to $\varphi_1 \vee \varphi_2$ is computed as the maximum between the 
robustness with respect to $\varphi_1$ and that in $\varphi_2$. 
The robustness with respect to $\neg \varphi_1$ is computed as the additive inverse of the robustness with respect to $\varphi_1$.
Finally, when $\varphi=\funtil{\varphi_1}{[a,b]}{\varphi_2}$, the output tuple $v_0,\ldots,v_k$ is computed from the robustness with respect to $\varphi_1$ and $\varphi_2$, and the time interval $[a,b]$ via the function $\Call{Until}{}$ defined in Figure~\ref{alg:computeuntil}.

\begin{figure}
\small
\begin{algorithmic}[1]
\Function{Until}{$v^1_0,\ldots,v^1_k$~,~$v^2_0,\ldots,v^2_k$,$a$,$b$} 
\State $\forall 0\leq i\leq k: v_i \gets 0.0$
\State $i\gets 0$
\While{$i\leq k$}
\State $\forall j\in [i+a,i+b]: w^1_j=\min\{ v^1_h | h \in [i,j]\}$ 
\State $\forall j\in [i+a,i+b]: w^2_j=\min\{ w^1_j , v^2_j \}$
\State $v_i \gets \max\{ w^2_j | j\in [i+a,i+b]\}$
\EndWhile
\State \Return $v_0,\ldots,v_k$
\EndFunction
\end{algorithmic}
\caption{Evaluation of robustness with respect to the until formula.}
\label{alg:computeuntil}
\end{figure}

Given a formula $\varphi$ and an initial \datastate{} $\ds_\system$, we let $\Call{Robust}{\ds_\system,\varphi,\ell,N}=v$ if and only if $\Call{Sat}{\ds_\system,\varphi,\ell,N}=v_0,\ldots,v_k$ and $v=v_0$.
The following theorem guarantees that when $N$ goes to infinite, the robustness computed by function $\Call{Sat}{}$ converges, almost surely, to the exact value. 

\begin{restatable}{theorem}{thmrobustness}
\label{thm:robustness}
For any formula $\varphi$, system $\system$, \datastate{} $\ds_\system$, and integer $\ell>0$
\[
\lim_{N\mapsto \infty} \Call{Robust}{\ds_\system,\varphi,\ell,N} = \sat{\varphi}{\system}.
\]
\end{restatable}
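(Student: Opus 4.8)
The plan is to argue by structural induction on $\varphi$, establishing the \emph{stronger} statement that the whole tuple $v_0,\dots,v_k$ returned by $\Call{Eval}{\{E_0,\dots,E_k\},\varphi,\ell,N}$ (with $k=\Call{Horizon}{\varphi}$) satisfies $\lim_{N\to\infty}v_i=\sat{\varphi}{\system,i}$ almost surely, for every $i$ with $0\le i\le k$. The theorem is then the instance $i=0$, since $\Call{Robust}{\ds_\system,\varphi,\ell,N}=v_0$ and $\sat{\varphi}{\system}=\sat{\varphi}{\system,0}$. Three facts carry the argument: the weak convergence $\lim_{N\to\infty}\hat{\ES}_{\system,i}^{N}=\ES_{\system,i}$ of~\eqref{eq:weak_convergence}; the almost-sure convergence of $\Call{ComputeWass}{}$ to the exact Wasserstein distance established in Theorem~\ref{thm:estimate}; and the fact that, apart from the calls to $\Call{ComputeWass}{}$, the functions $\Call{Eval}{}$ and $\Call{Until}{}$ only ever apply finitely many instances of $\max$, $\min$ and sign change to already-computed values, each of which is a continuous map on $[-1,1]^{k+1}$.

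For the base cases, if $\varphi=\top$ then $v_i=1=\sat{\top}{\system,i}$ for all $i$ and there is nothing to prove. If $\varphi=\sinistra{\mu}^{\rho}_{p}$, then $v_i=p-\Call{ComputeWass}{E',E_i,\rho}$, where $E'$ is the set of $N$ independent samples of $\mu$ produced by $\Call{Sample}{}$ and $E_i$ is the set of data states recorded at time $i$ across the $\ell N$ runs produced by $\Call{Estimate}{}$. By construction of $\Call{Estimate}{}$ the elements of $E_i$ are i.i.d.\ with law $\ES_{\system,i}$, and as $N\to\infty$ the sizes of $E'$ and $E_i$ both diverge, so Theorem~\ref{thm:estimate} applies and gives $\Call{ComputeWass}{E',E_i,\rho}\to\Wasserstein(m^{\D}_{\rho,i})(\mu,\ES_{\system,i})$ almost surely, whence $v_i\to\sat{\sinistra{\mu}^{\rho}_{p}}{\system,i}$ almost surely. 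The case $\varphi=\destra{\mu}^{\rho}_{p}$ is symmetric: the arguments of $\Call{ComputeWass}{}$ are swapped, only the first $N$ elements $E_i\!\downarrow\! N$ of $E_i$ are kept, and $E'$ now carries $\ell N$ samples of $\mu$, but the same invocation of Theorem~\ref{thm:estimate} yields $v_i\to\sat{\destra{\mu}^{\rho}_{p}}{\system,i}$ almost surely.

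For the inductive step, fix a compound $\varphi$ and assume the claim for its immediate subformulae. If $\varphi=\varphi_1\vee\varphi_2$ then $v_i=\max\{v_i^{1},v_i^{2}\}$, where $v^{1},v^{2}$ are the tuples returned for $\varphi_1$ and $\varphi_2$; on the intersection of the two probability-one events on which $v_i^{1}\to\sat{\varphi_1}{\system,i}$ and $v_i^{2}\to\sat{\varphi_2}{\system,i}$ for all $i\le k$ — a finite intersection, hence still of probability one — continuity of $\max$ gives $v_i\to\max\{\sat{\varphi_1}{\system,i},\sat{\varphi_2}{\system,i}\}=\sat{\varphi_1\vee\varphi_2}{\system,i}$. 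The case $\varphi=\neg\varphi_1$ is identical using $v_i=-v_i^{1}$. For $\varphi=\funtil{\varphi_1}{[a,b]}{\varphi_2}$, the output tuple is $\Call{Until}{v^{1},v^{2},a,b}$, and $\Call{Until}{}$ applies to the entries of $v^{1}$ and $v^{2}$ a fixed finite composition of $\max$ and $\min$ — the analogue of the STL monitoring procedure of~\cite{MN04} — which is a continuous map; hence, by the inductive hypothesis, $v_i$ converges almost surely to the value that composition takes at $\sat{\varphi_1}{\system,\cdot}$ and $\sat{\varphi_2}{\system,\cdot}$, which is $\sat{\funtil{\varphi_1}{[a,b]}{\varphi_2}}{\system,i}$ by the definition of the quantitative semantics.

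The step that requires most care is not any individual inductive case but the handling of the randomness. One has to check that the hypotheses of Theorem~\ref{thm:estimate} are genuinely met at every atomic occurrence — in particular that $\Call{Estimate}{}$ returns true i.i.d.\ draws from $\ES_{\system,i}$ (rather than from the empirical proxy $\hat{\ES}_{\system,i}^{N}$) and that the fresh samples of $\mu$ drawn by distinct calls to $\Call{Sample}{}$ are independent of the simulation — and one has then to verify that the finitely many exceptional null sets accumulated along the recursion (one per atomic subformula and per time index $\le k$) may be discarded together, leaving a single probability-one event on which all the asserted limits hold simultaneously. Once this is settled, the continuity manipulations above go through routinely.
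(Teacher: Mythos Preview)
Your proof is correct and follows exactly the approach the paper takes: structural induction on $\varphi$, with Theorem~\ref{thm:estimate} handling the atomic cases $\sinistra{\mu}^\rho_p$ and $\destra{\mu}^\rho_p$, and the Boolean and temporal connectives dispatched by continuity of finite $\max$/$\min$/negation compositions. The paper's own proof is a two-line sketch stating precisely this; your version simply fills in the details the paper omits (the need to prove convergence of the whole tuple $v_0,\dots,v_k$ rather than just $v_0$, the i.i.d.\ verification for the samples in $E_i$, and the bookkeeping of the finitely many null sets), so there is nothing to distinguish in approach.
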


\begin{proof}
The proof can be found in Appendix~\ref{app:thm_robustness}.
\end{proof}

\begin{figure*}[t]
\begin{subfigure}{.48\textwidth}
\includegraphics[scale=0.35]{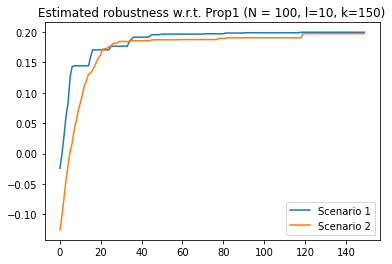}
\caption{\textbf{Prop1}, $k = 150$.}
\label{fig:mcresult_1_150}
\end{subfigure}\hfill
\begin{subfigure}{.48\textwidth}
\includegraphics[scale=0.35]{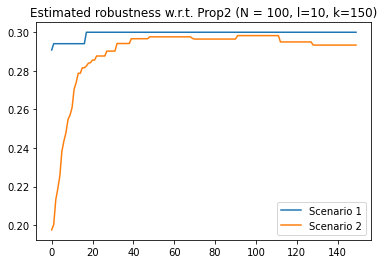}
\caption{\textbf{Prop2}, $k = 150$.}
\label{fig:mcresult_2_150}
\end{subfigure}\hfill
\begin{subfigure}{.48\textwidth}
\includegraphics[scale=0.35]{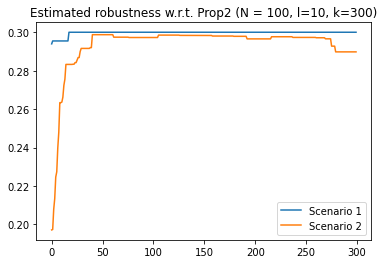}
\caption{\textbf{Prop2}, $k = 300$.}
\label{fig:mcresult_2_300}
\end{subfigure}\hfill
\begin{subfigure}{.48\textwidth}
\includegraphics[scale=0.35]{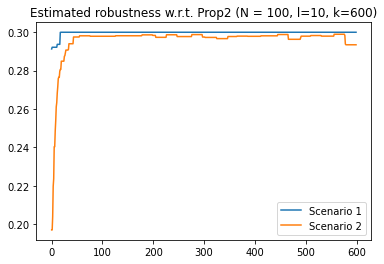}
\caption{\textbf{Prop2}, $k = 600$.}
\label{fig:mcresult_2_600}
\end{subfigure}
\caption{Estimated robustness with respect to \textbf{Prop1} and \textbf{Prop2}.}
\label{fig:mcresult}
\end{figure*}

\begin{example}
The proposed algorithm can be used to verify the requirements of Example~\ref{ex:properties}. 
In Figure~\ref{fig:mcresult} we report the results related to the following parameters instantiations: $N = 100$, $\ell=10$, and
for \textbf{Prop1} we used $\tau_1 = 20$, $\tau_2 = 30$ and $p = 0.2$;
whereas for \textbf{Prop2} we used $\tau_1 = 40, \tau_2 = 20$, $p_h = 0.2$, $p_s = 0.3$, and $\varepsilon = 0.3$.
The plots depicted in Figure~\ref{fig:mcresult} compare the variation of robustness (in time) of the system with respect to the two different scenarios from Example~\ref{ex:scenario}.
The small drop, after step $120$, in the value of the robustness with respect to \textbf{Prop2} (Figure~\ref{fig:mcresult_2_150}) is only due to a contrast between the time horizon of the formula and the time horizon $k$ of the simulation.
In fact, the same final drop can be observed when considering different time horizons for the simulation, e.g. $k = 300$ in Figure~\ref{fig:mcresult_2_300} and $k = 600$ in Figure~\ref{fig:mcresult_2_600}.
\end{example}


\section{Concluding remarks}
\label{sec:conclusion}

We have introduced the \logicName{} (\logicShort), a probabilistic variant of STL expressing requirements on the behaviour of systems under uncertainties.
Differently from the other probabilistic temporal logics usually considered in the literature, \logicShort{} can be used to express the properties of the distributions expressing the transient behaviour of the system.
Up to our knowledge, \cite{TH20} is the only other paper proposing to substitute probabilistic guarantees on the temporal properties with a richer description of the probabilistic events.
In detail, \cite{TH20} introduces \emph{ProbSTL} a stochastic variant of STL tailored to the incremental runtime verification of safe behaviour of robotic systems under uncertainties.
The objective is to develop a predictive stream reasoning tool for monitoring the runtime behaviour of robotic systems.
Hence, their stochastic signal is given by the prediction on the possible future trajectories of a system, taking into account the measurement errors by the sensors and the unpredictable behaviour of the environment.
Yet, ProbSTL specifications are tested only on the current trajectory of the system.
This is the main difference with our work, since our logic has been built to express the overall uncertain behaviour of the system.
This disparity is also a consequence of the different application context: off-line verification for us, runtime verification in \cite{TH20}.
However, as future work, we plan to develop a predictive model for the runtime monitoring of \logicShort{} specifications.
In particular, inspired by \cite{PPZGSS18,BCPSS19} where deep neural networks are used as reachability predictors for predictive monitoring, we intend to integrate our work with learning techniques, to favour the computation and evaluation of the predictions.

Another application context of probabilistic temporal logics is that of Markov processes as \emph{transformers of distributions} \cite{KA04,KVAK10}.
Roughly, one can interpret state-to-state transition probabilities as a single distribution over the state space, so that the behaviour of the system is given by the sequence of the so obtained distributions.
While this approach may resemble the \tracciones{} of \cite{CLT21}, there are some substantial differences.
Firstly, the state space in \cite{KA04,KVAK10} is finite and discrete, whereas here we are in the continuous setting.
Secondly, the transformers of distributions consider the behaviour of the system as a whole, i.e., it is not possible to separate the logical component from the environment.
Moreover, the temporal logics used to model check properties of transformers of distributions, respectively \emph{iLTL} in \cite{KA04} and the \emph{almost acyclic} B{\"u}chi automata in \cite{KVAK10}, are not comparable to our \logicShort.
In fact, those specifications have a boolean semantics, while \logicShort{} formulae are interpreted in terms of robustness.

Recently, \cite{WRWVD19} proposed a statistical model checking algorithm based on stratified sampling for the verification of PCTL specification over Markov chains.
Informally, stratified sampling allows for the generation of negatively correlated samples, i.e., samples whose covariance is negative, thus considerably reducing the number of samples needed to obtain confident results from the algorithm.
However, the proposed algorithm works under a number of assumptions restricting the form of the PCTL formulae to be checked.
While direct comparison of the two algorithms would not be feasible, nor meaningful given the disparity in the classes of formulae, it would be worth studying the use of stratified sampling in our model checking algorithm.

We also plan to investigate the application of our framework to the analysis of biological systems.
Some quantitative extensions of temporal logics have already been proposed in that setting (e.g. \cite{FR08,RBFS09,RBFS11}) to capture the notion of robustness from \cite{Ki07} or similar proposals \cite{NGM18}.
It would be interesting to see whether the use of \logicShort{} and \tracciones{} can lead to new results in this setting.

\bibliographystyle{plainurl}
\bibliography{Traccione_bib}

\newpage
\appendix

\section{Additional details on the three-tank experiment}
\label{app:flow_rate_details}

Let $A$ be the cross sectional area of each tank, and $a$ be the cross sectional area of the connecting and outlet pipes.
The volume balance difference equations of each tank, considering $\Delta\tau=1$ as sampling time interval, are the following:
\begin{equation}
\label{eq:volumes}
\begin{array}{rcl}
A(l_1(\tau+1)-l_1(\tau)) & = & q_1(\tau) - q_{12}(\tau) \\
A(l_2(\tau+1)-l_2(\tau)) & = & q_{12}(\tau) - q_{23}(\tau) \\
A(l_3(\tau+1)-l_3(\tau)) & = & q_2(\tau) + q_{23}(\tau) - q_{0}(\tau). \\
\end{array}
\end{equation}
We can then apply Torricelli's law to the equations in~\eqref{eq:volumes} to obtain the flow rates $q_{12}$ and $q_{23}$:
\begin{align*}
q_{12}(\tau) ={} & 
\begin{cases}
a_{12}(\tau) a \sqrt{2g\big(l_1(\tau) - l_2(\tau)\big)} & \text{ if } l_1(\tau)\ge l_2(\tau) \\[.1cm]
-a_{12}(\tau) a \sqrt{2g\big(l_2(\tau) - l_1(\tau)\big)} & \text{ otherwise;}
\end{cases}
\\[.2cm]
q_{23}(\tau) ={} & 
\begin{cases}
a_{23}(\tau) a \sqrt{2g\big(l_2(\tau) - l_3(\tau)\big)} & \text{ if } l_2(\tau)\ge l_3(\tau) \\[.1cm]
-a_{23}(\tau) a \sqrt{2g\big(l_3(\tau) - l_2(\tau)\big)} & \text{ otherwise;}
\end{cases}
\end{align*}
where $g$ is the gravitational constant, and $a_{12},a_{23}$ are the loss coefficients of the respective pipes.
These coefficients are represented as time dependent functions since they depend on the geometry of the pipes and on the water level in the tanks.
In our experiments (see the script at {\small\color{blue}\url{https://github.com/gitUltron/Ultron}}), we have used the approximation $g = 9.81$, and the following values for the aforementioned coefficients:
\begin{inparaenum}[(i)]
\item $a = 0.5$, 
\item $a_{12} = 0.75$,
\item $a_{23} = 0.75$.
\end{inparaenum}


\section{Proof of Lemma~\ref{lem:sx_dx}}
\label{app:lem_sx_dx}

\lemsxdx*

\begin{proof}
It is enough to show that 
\[
| \sat{\varphi}{\system_1,\tau} - \sat{\varphi}{\system_2,\tau} | 
\le 
\max(\m^\lambda_{\rho,\OT}(\system_1,\system_2), \m^\lambda_{\rho,\OT}(\system_2,\system_1))
\]
holds for all \logicShort{} formula $\varphi$ and $\tau \in \OT$.
We proceed by structural induction over $\varphi$.
\begin{itemize}
\item
Base case $\varphi \equiv \top$. Immediate.

\item
Base case $\varphi \equiv \sinistra{\mu}^\rho_{p}$.
We distinguish two cases.
The first case is
$\Wasserstein(m^\D_{\rho,\tau})(\mu, \ES_{\system_1,\tau}) \ge 
 \Wasserstein(m^\D_{\rho,\tau})(\mu, \ES_{\system_2,\tau})$. 
We have 
\begin{align*}
& 
|\sat{\varphi}{\system_1,\tau} - \sat{\varphi}{\system_2,\tau}|
\\
= \;
&
|p -  \lambda(\tau)\Wasserstein(m^\D_{\rho,\tau})(\mu, \ES_{\system_1,\tau}) -
p +  \lambda(\tau)\Wasserstein(m^\D_{\rho,\tau})(\mu, \ES_{\system_2,\tau}) |
\\
= \;
&
 \lambda(\tau)\Wasserstein(m^\D_{\rho,\tau})(\mu, \ES_{\system_1,\tau}) -
 \lambda(\tau)\Wasserstein(m^\D_{\rho,\tau})(\mu, \ES_{\system_2,\tau}) 
\\
\le \;
&
\lambda(\tau)\Wasserstein(m^\D_{\rho,\tau})(\mu, \ES_{\system_2,\tau}) 
+
\lambda(\tau)\Wasserstein(m^\D_{\rho,\tau})(\ES_{\system_2,\tau},\ES_{\system_1,\tau}) 
-
 \lambda(\tau)\Wasserstein(m^\D_{\rho,\tau})(\mu, \ES_{\system_2,\tau}) 
\\
= \;
&
\lambda(\tau)\Wasserstein(m^\D_{\rho,\tau})(\ES_{\system_2,\tau},\ES_{\system_1,\tau}) 
\\
\le \;
&
\m^\lambda_{\rho,\OT}(\system_2,\system_1)
\\
\le \;
&
\max(\m^\lambda_{\rho,\OT}(\system_1,\system_2), \m^\lambda_{\rho,\OT}(\system_2,\system_1))
\end{align*}
with the fourth step by the triangular property.
\\
The second case is $\Wasserstein(m^\D_{\rho,\tau})(\mu, \ES_{\system_1,\tau}) < 
 \Wasserstein(m^\D_{\rho,\tau})(\mu, \ES_{\system_2,\tau})$ 
and can be treated as the previous one,
by exchanging the roles of $\system_1$ and $\system_2$.

\item
Base case $\varphi \equiv \destra{\mu}^\rho_{p}$.
This case is analogous to case $\varphi \equiv \sinistra{\mu}^\rho_{p}$.

\item
Inductive step $\neg \varphi$.
We have 
\begin{align*}
& 
|\sat{\neg\varphi}{\system_1,\tau} - \sat{\neg \varphi}{\system_2,\tau}|
\\
= \;
&
|- \sat{\varphi}{\system_1,\tau} -(- \sat{\varphi}{\system_2,\tau})  |
\\
= \;
&
|\sat{\varphi}{\system_2,\tau} - \sat{\varphi}{\system_1,\tau}  |
\\
\le \;
&
\max(\m^\lambda_{\rho,\OT}(\system_1,\system_2), \m^\lambda_{\rho,\OT}(\system_2,\system_1))
\end{align*}
with the last step by the inductive hypothesis.

\item
Inductive step $\varphi_1 \vee \varphi_2$.
We distinguish two cases.\\
The first case is $\max(\sat{\varphi_1}{\system_1,\tau}, \sat{\varphi_2}{\system_1,\tau}) \ge
\max(\sat{\varphi_1}{\system_2,\tau}, \sat{\varphi_2}{\system_2,\tau})$.
\\
We have two subcases.
The first subcase is $\sat{\varphi_1}{\system_1,\tau} \ge \sat{\varphi_2}{\system_1,\tau}$.
We have:
\begin{align*}
& 
|\sat{\varphi_1 \vee \varphi_2}{\system_1,\tau} - \sat{\varphi_1 \vee \varphi_2}{\system_2,\tau}|
\\
= \;
&
|\max(\sat{\varphi_1}{\system_1,\tau}, \sat{\varphi_2}{\system_1,\tau}) - 
\max(\sat{\varphi_1}{\system_2,\tau}, \sat{\varphi_2}{\system_2,\tau})|
\\
= \;
&
\max(\sat{\varphi_1}{\system_1,\tau}, \sat{\varphi_2}{\system_1,\tau}) - 
\max(\sat{\varphi_1}{\system_2,\tau}, \sat{\varphi_2}{\system_2,\tau})
\\
= \;
&
\sat{\varphi_1}{\system_1,\tau} - \max(\sat{\varphi_1}{\system_2,\tau}, \sat{\varphi_2}{\system_2,\tau})
\\
\le \;
&
\sat{\varphi_1}{\system_1,\tau} - \sat{\varphi_1}{\system_2,\tau}
\\
\le \;
&
\max(\m^\lambda_{\rho,\OT}(\system_1,\system_2), \m^\lambda_{\rho,\OT}(\system_2,\system_1))
\end{align*}
with the last step by the inductive hypothesis.
\\
The second subcase is $\sat{\varphi_1}{\system_1,\tau} < \sat{\varphi_2}{\system_1,\tau}$ and is obtained with the same arguments, by exchanging the role of  $\varphi_1$ and  $\varphi_2$. 
\\
The second case is $\max(\sat{\varphi_1}{\system_1,\tau}, \sat{\varphi_2}{\system_1,\tau}) <
\max(\sat{\varphi_1}{\system_2,\tau}, \sat{\varphi_2}{\system_2,\tau})$ and is obtained with the same arguments of the first case, by exchanging the role of $\system_1$ and $\system_2$.

\item
Inductive case $\funtil{\varphi_1}{[a,b]}{\varphi_2}$.
We distinguish two cases.\\
The first case is $\sat{\funtil{\varphi_1}{[a,b]}{\varphi_2}}{\system_1,\tau} \ge
\sat{\funtil{\varphi_1}{[a,b]}{\varphi_2}}{\system_2,\tau}$.
\\
By definition of the semantic function $\sat{\_}{\_}$, we have that 
\[
\sat{\funtil{\varphi_1}{[a,b]}{\varphi_2}}{\system_1,\tau} =
\max_{\tau' \in [\tau + a, \tau + b]}\min\left\{ \sat{\varphi_2}{\system_1,\tau'}, \min_{\tau'' \in [\tau+a, \tau')} \sat{\varphi_1}{\system_1,\tau''} \right\},
\]
which equals the value
$\min\left\{ \sat{\varphi_2}{\system_1,\tau'''}, \min_{\tau'' \in [\tau+a, \tau''')} \sat{\varphi_1}{\system_1,\tau''} \right\}$ for a suitable $\tau'''$.
\\
We have two subcases.
The first subcase is 
$\sat{\varphi_2}{\system_2,\tau'''} \le \min_{\tau'' \in [\tau+a, \tau''')} \sat{\varphi_1}{\system_2,\tau''}$.
We have:
\begin{align*}
& 
|\sat{\funtil{\varphi_1}{[a,b]}{\varphi_2}}{\system_1,\tau} - \sat{\funtil{\varphi_1}{[a,b]}{\varphi_2}}{\system_2,\tau}|
\\
= \;
&
\sat{\funtil{\varphi_1}{[a,b]}{\varphi_2}}{\system_1,\tau} - \sat{\funtil{\varphi_1}{[a,b]}{\varphi_2}}{\system_2,\tau}
\\
= \;
&
\min\left\{ \sat{\varphi_2}{\system_1,\tau'''}, \min_{\tau'' \in [\tau+a, \tau''')} \sat{\varphi_1}{\system_1,\tau''} \right\}
 -
\max_{\tau' \in [\tau + a, \tau + b]}\min\left\{ \sat{\varphi_2}{\system_2,\tau'}, \min_{\tau'' \in [\tau+a, \tau')} \sat{\varphi_1}{\system_2,\tau''} \right\} 
\;
\\
\le  \;
&
\min\left\{ \sat{\varphi_2}{\system_1,\tau'''}, \min_{\tau'' \in [\tau+a, \tau''')} \sat{\varphi_1}{\system_1,\tau''} \right\}
 -
\min\left\{ \sat{\varphi_2}{\system_2,\tau'''}, \min_{\tau'' \in [\tau+a, \tau''')} \sat{\varphi_1}{\system_2,\tau''} \right\} 
\\
= \;
&
\min\left\{ \sat{\varphi_2}{\system_1,\tau'''}, \min_{\tau'' \in [\tau+a, \tau''')} \sat{\varphi_1}{\system_1,\tau''} \right\}
 -\sat{\varphi_2}{\system_2,\tau'''}
\\
\le \;
&
 \sat{\varphi_2}{\system_1,\tau'''}
 -\sat{\varphi_2}{\system_2,\tau'''}
\\
\le \;
&
\max(\m^\lambda_{\rho,\OT}(\system_1,\system_2), \m^\lambda_{\rho,\OT}(\system_2,\system_1))
\end{align*}
with the first step since we are in the first case, the second step from the definition of $\tau'''$, the fourth step since we are in the first subcase and the last step by the inductive hypothesis.
\\
The second subcase is $\sat{\varphi_2}{\system_2,\tau'''} > \min_{\tau'' \in [\tau+a, \tau''')} \sat{\varphi_1}{\system_2,\tau''}$. 
Let $\tau''''$ be such that 
 $\min_{\tau'' \in [\tau+a, \tau''')} \sat{\varphi_1}{\system_2,\tau''} = \sat{\varphi_1}{\system_2,\tau''''}$.
We have:
\begin{align*}
& 
|\sat{\funtil{\phi_1}{[a,b]}{\phi_2}}{\system_1,\tau} - \sat{\funtil{\phi_1}{[a,b]}{\phi_2}}{\system_2,\tau}|
\\
= \;
&
\sat{\funtil{\phi_1}{[a,b]}{\phi_2}}{\system_1,\tau} - \sat{\funtil{\phi_1}{[a,b]}{\phi_2}}{\system_2,\tau}
\\
= \;
&
\min\left\{ \sat{\varphi_2}{\system_1,\tau'''}, \min_{\tau'' \in [\tau+a, \tau''')} \sat{\varphi_1}{\system_1,\tau''} \right\}
 -
\max_{\tau' \in [\tau + a, \tau + b]}\min\left\{ \sat{\varphi_2}{\system_2,\tau'}, \min_{\tau'' \in [\tau+a, \tau')} \sat{\varphi_1}{\system_2,\tau''} \right\} 
\;
\\
\le  \;
&
\min\left\{ \sat{\varphi_2}{\system_1,\tau'''}, \min_{\tau'' \in [\tau+a, \tau''')} \sat{\varphi_1}{\system_1,\tau''} \right\}
 -
\min\left\{ \sat{\varphi_2}{\system_2,\tau'''}, \min_{\tau'' \in [\tau+a, \tau''')} \sat{\varphi_1}{\system_2,\tau''} \right\} 
\\
=  \;
&
\min\left\{ \sat{\varphi_2}{\system_1,\tau'''}, \min_{\tau'' \in [\tau+a, \tau''')} \sat{\varphi_1}{\system_1,\tau''} \right\}
 -
 \min_{\tau'' \in [\tau+a, \tau''')} \sat{\varphi_1}{\system_2,\tau''} 
\\
=  \;
&
\min\left\{ \sat{\varphi_2}{\system_1,\tau'''}, \min_{\tau'' \in [\tau+a, \tau''')} \sat{\varphi_1}{\system_1,\tau''} \right\}
 -
 \sat{\varphi_1}{\system_2,\tau''''} 
\\
\le \;
&
\sat{\varphi_1}{\system_1,\tau''''} -
\sat{\varphi_1}{\system_2,\tau''''} 
\\
\le \;
&
\max(\m^\lambda_{\rho,\OT}(\system_1,\system_2), \m^\lambda_{\rho,\OT}(\system_2,\system_1))
\end{align*}
with the first step since we are in the first case, the second step from the definition of $\tau'''$, the fourth step since we are in the second subcase, the fifth  step from the definition of $\tau''''$ and the last step by the inductive hypothesis.
\end{itemize}
\end{proof}


\section{Proof of Lemma~\ref{lem:distinguishing_tau}}
\label{app:lem_distinguishing_tau}

\lemdistinguishingtau*

\begin{proof}
Given the set of observation times $\OT$, we let
\begin{align*}
\tau_1 ={} & 
\mathrm{arg}\max_{\tau \in \OT} \lambda(\tau) \Wasserstein(m^\D_{\rho,\tau})(\ES_{\system_1,\tau},\ES_{\system_2,\tau})
\\
\tau_2 ={}& 
\mathrm{arg} \max_{\tau \in \OT} \lambda(\tau) \Wasserstein(m^\D_{\rho,\tau})(\ES_{\system_2,\tau},\ES_{\system_1,\tau})
\end{align*}
so that
\begin{align*}
& \max\left\{ \m^\lambda_{\rho,\OT}(\system_1,\system_2), \m^{\lambda}_{\rho,\OT}(\system_2,\system_1) \right\} \\
={}& 
\max\left\{ \lambda(\tau_1) \Wasserstein(m^\D_{\rho,\tau_1})(\ES_{\system_1,\tau_1},\ES_{\system_2,\tau_1}), \lambda(\tau_2) \Wasserstein(m^\D_{\rho,\tau_2})(\ES_{\system_2,\tau_2},\ES_{\system_1,\tau_2})\right\}.
\end{align*}
Assume, without loos of generality, that 
\[
\lambda(\tau_1) \Wasserstein(m^\D_{\rho,\tau_1})(\ES_{\system_1,\tau_1},\ES_{\system_2,\tau_1})
\ge
\lambda(\tau_2) \Wasserstein(m^\D_{\rho,\tau_2})(\ES_{\system_2,\tau_2},\ES_{\system_1,\tau_2}).
\]
In the symmetric case, the proof follows by switching the roles of $\system_1$ and $\system_2$.

Consider the formula
\begin{align*}
& \varphi = \sinistra{\mu}^{\rho}_p \\
& \mu = \ES_{\system_1,\tau_1} \\
& p = \lambda(\tau_2) \Wasserstein(m^\D_{\rho,\tau_2})(\ES_{\system_2,\tau_2},\ES_{\system_1,\tau_2}).
\end{align*}
Then we have
\begin{align*}
\sat{\varphi}{\system_1,\tau_1}
={}&
p - \lambda(\tau_1) \Wasserstein(m^\D_{\rho,\tau_1})(\mu,\ES_{\system_1,\tau_1})
\\
={}& 
\lambda(\tau_2) \Wasserstein(m^\D_{\rho,\tau_2})(\ES_{\system_2,\tau_2},\ES_{\system_1,\tau_2}) - 
\lambda(\tau_1) \Wasserstein(m^\D_{\rho,\tau_1})(\ES_{\system_1,\tau_1},\ES_{\system_1,\tau_1})
\\
={}&
\lambda(\tau_2) \Wasserstein(m^\D_{\rho,\tau_2})(\ES_{\system_2,\tau_2},\ES_{\system_1,\tau_2}) \\
\sat{\varphi}{\system_2,\tau_1} 
={}&
p - \lambda(\tau_1) \Wasserstein(m^\D_{\rho,\tau_1})(\mu,\ES_{\system_2,\tau_1})
\\
={}&
\lambda(\tau_2) \Wasserstein(m^\D_{\rho,\tau_2})(\ES_{\system_2,\tau_2},\ES_{\system_1,\tau_2}) - 
\lambda(\tau_1) \Wasserstein(m^\D_{\rho,\tau_1})(\ES_{\system_1,\tau_1},\ES_{\system_2,\tau_1})
\end{align*}
from which we obtain
\[
|\sat{\varphi}{\system_1,\tau_1} - \sat{\varphi}{\system_2,\tau_1}|
= 
\lambda(\tau_1) \Wasserstein(m^\D_{\rho,\tau_1})(\ES_{\system_1,\tau_1},\ES_{\system_2,\tau_1})
\]
thus concluding the proof.
\end{proof}


\section{Proof of Theorem~\ref{thm:estimate}}
\label{app:thm_estimate}

\thmestimate*

\begin{proof}
We split the proof into two parts showing respectively:
\begin{equation}
\label{proof:one}
\Wasserstein(m^\D_{\rho,i})(\mu,\nu) 
= 
\lim_{N \to \infty} \Wasserstein(m^\D_{\rho,i})(\hat{\mu}^N,\hat{\nu}^{\ell N}_{\system_2,i}) 
\enspace .
\end{equation}
\begin{equation}
\label{proof:two}
\Wasserstein(m^\D_{\rho,i})(\hat{\mu}^{N},\hat{\nu}^{\ell N})
=
\frac{1}{\ell N} \sum_{h = 1}^{\ell N} \max\left\{ \nu_h - \omega_{\lceil \frac{h}{\ell} \rceil},0 \right\}
\enspace .
\end{equation}
where $\hat{\mu}^{N}$ and $\hat{\nu}^{\ell N}$ are the estimated probability distributions obtained from $\mu$ and $\nu$ by sampling $N$ and $\ell N$ values. 
\begin{itemize}
\item {\sc Proof of Equation~\eqref{proof:one}.}

We recall that the sequence $\{\hat{\mu}^{N}\}$ (resp. $\{\hat{\nu}^{N}\}$) converges weakly to $\mu$ (resp. $\nu$) (see Equation~\eqref{eq:weak_convergence}).
Moreover, we can prove that these sequences converge weakly in $\distrib(\D,\borel_\D)$ in the sense of \cite[Definition 6.8]{Vil08}.
In fact, given the $i$-ranking function $\rho_i$, the existence of a \datastate{} $\tilde{\ds}$ such that $\rho_i(\tilde{\ds}) = 0$ is guaranteed (remember that the constraints used to define $\rho_i$ are on the possible values of state variables and a \datastate{} fulfilling all the requirements is assigned value $0$).
Thus, for any $\ds \in \D$ we have that
$
m^\D_{\rho,i}(\tilde{\ds},\ds) = \max\{\rho_i(\ds) - \rho_i(\tilde{\ds}),0\} = \rho_i(\ds).
$
Since, moreover, by definition $\rho_i$ is continuous and bounded, the weak convergence of the probability measures gives
$\int_\D \rho_i(\ds) \dd(\hat{\mu}^N(\ds)) \to \int_\D \rho_i(\ds) \dd(\mu(\ds))$ and 
$\int_\D \rho_i(\ds) \dd(\hat{\nu}^{\ell N}(\ds)) \to \int_\D \rho_i(\ds) \dd(\nu(\ds))$
and thus Definition 6.8.(i) of \cite{Vil08} is satisfied.
As $\D$ is a Polish space, by \cite[Theorem 6.9]{Vil08} we obtain that 
\[
\hat{\mu}^{N} \to \mu \text{ and } \hat{\nu}^{\ell N} \to \nu \text{ implies }
\Wasserstein(m^\D_{\rho,i})(\mu,\nu) 
= 
\lim_{N \to \infty} \Wasserstein(m^\D_{\rho,i})(\hat{\mu}^{N},\hat{\nu}^{\ell N}) 
\enspace .
\]

\item {\sc Proof of Equation~\eqref{proof:two}.}

For this part of the proof we follow \cite{TK09}.
Since the ranking function is continuous, it is in particular $\borel_\D$ measurable and therefore for any probability measure $\mu$ on $(\D,\borel_\D)$ we obtain that 
$
F_{\mu,\rho_i}(r) := \mu(\{\rho_i(\ds) < r\})
$
is a well defined cumulative distribution function.
Since, moreover, we can always assume that the values $\rho_i(\ds_1^j)$ are sorted, so that $\rho_i(\ds_i^j) \le \rho_i(\ds_1^{j+1})$ for each $j = 1,\dots,N-1$, we can express the counter image of the cumulative distribution function as
\begin{equation}
\label{eq:cdf_inverse}
F^{-1}_{\hat{\mu}^{N}, \rho_i}(r) = \rho_i(\ds_1^j)
\text{ whenever } \frac{j-1}{N} < r \le \frac{j}{N}
\enspace .
\end{equation}
A similar reasoning holds for $F^{-1}_{\hat{\nu}^{\ell N}, \rho_i}(r)$.

Then, by \cite[Proposition 6.2]{FR18}, for each $N$ we have that
\[
\Wasserstein(m^\D_{\rho,i})(\hat{\mu}^{N}, \hat{\nu}^{\ell N}) = 
\int_0^1
\max
\left\{
F^{-1}_{\hat{\nu}^{\D,\ell N}, \rho_i}(r) -
F^{-1}_{\hat{\mu}^{\D,N}, \rho_i}(r),
0
\right\}
\dd r 
\enspace .
\]
Let us now partition the interval $[0,1]$ into $\ell N$ intervals of size $\frac{1}{\ell N}$, thus obtaining
\[
\Wasserstein(m^\D_{\rho,i})(\hat{\mu}^{N}, \hat{\nu}^{\ell N}) = 
\sum_{h=1}^{\ell N}
\left(
\int_{\frac{h-1}{\ell N}}^{\frac{h}{\ell N}} 
\max\left\{
F^{-1}_{\hat{\nu}^{\ell N}, \rho_i}(r) -
F^{-1}_{\hat{\mu}^{N}, \rho_i}(r),
0
\right\}
\dd r 
\right)
\enspace .
\]
From Equation~\eqref{eq:cdf_inverse}, on each interval $(\frac{h-1}{\ell N},\frac{h}{\ell N}]$ it holds that $F^{-1}_{\hat{\nu}^{N}, \rho_i}(r) = \rho_i(\ds_1^{\lceil \frac{h}{\ell} \rceil})$ and $F^{-1}_{\hat{\nu}^{\ell N}, \rho_i}(r) = \rho_i(\ds_2^h)$.
Moreover, both functions are constant on each the interval so that the value of the integral is given by the difference multiplied by the length of the interval:
\begin{align*}
\Wasserstein(m^\D_{\rho,i})(\hat{\mu}^{N}, \hat{\nu}^{\ell N}) ={} &
\sum_{h=1}^{\ell N} \frac{1}{\ell N} 
\max\left\{\rho_i(\ds_2^h) - \rho_i(\ds_1^{\lceil \frac{h}{\ell} \rceil}), 0 \right\} \\
= & \sum_{h=1}^{\ell N} \frac{1}{\ell N} \max\left\{\nu_h - \omega_{\lceil \frac{h}{\ell} \rceil}, 0 \right\}
\enspace .
\end{align*}
By substituting the last equality into Equation~\eqref{proof:one} we obtain the thesis.
\end{itemize}
\end{proof}


\section{Proof of Theorem~\ref{thm:robustness}}
\label{app:thm_robustness}

\thmrobustness*

\begin{proof}
The proof follows by induction on the structure of the formula $\varphi$, using Theorem~\ref{thm:estimate} to deal with the base cases of $\varphi = \sinistra{\mu}^\rho_p$ and $\varphi = \destra{\mu}^\rho_p$.
\end{proof}


\section{The algorithms}
\label{sec:algorithms}

\begin{figure}[t]
\begin{minipage}[t]{0.4\textwidth}
\small
\begin{algorithmic}[1]
\Function{Simul}{$\ds_\system,k$} 
\State $i\gets 0$
\State $c \gets \ds_\system$
\State $a \gets c$ 
\While{$i\leq k$}
\State $c \gets \Call{SimStep}{c}$
\State $a \gets a,c$
\State $i \gets i+1$
\EndWhile
\State \Return $a$
\EndFunction
\end{algorithmic}
\end{minipage}
\hfill
\begin{minipage}[t]{0.55\textwidth}
\small
\begin{algorithmic}[1]
\Function{ComputeWass}{$E_1,E_2,\rho$}
\State $(\ds^1_1,\ldots,\ds^N_1)\gets E_1$
\State $(\ds^1_2,\ldots,\ds_2^{\ell N})\gets E_2$
\State $\forall j: (1\leq j\leq N): \omega_j\gets\rho(\ds^j_1)$
\State $\forall h: (1\leq h\leq \ell N): \nu_h\gets\rho(\ds^h_2)$
\State re index $\{\omega_j\}$ s.t.\ $\omega_j\leq \omega_{j+1}$	
\State re index $\{\nu_h\}$ s.t.\ $\nu_h\leq \nu_{h+1}$
\State \Return $\frac{1}{\ell N}\displaystyle{\sum_{h=1}^{\ell N}\max\{ \nu_h - \omega_{\lceil \frac{h}{\ell}\rceil}, 0\}}$
\EndFunction
\end{algorithmic}
\end{minipage}
\end{figure}
\begin{figure}[t]
\small
\begin{algorithmic}[1]
\Function{Estimate}{$\ds_\system,k,N$} 
\State $\forall i:(0\leq i\leq k): E_i \gets \emptyset$ 
\State $counter\gets 0$
\While{$counter< N$}
\State $(c_0,\ldots,c_k) \gets \Call{Simul}{\ds_\system,k}$
\State $\forall i: E_i \gets E_i,c_i$
\State $counter \gets counter + 1$
\EndWhile
\State \Return $E_0,\ldots,E_k$
\EndFunction
\end{algorithmic}	
\end{figure}

Given an initial \datastate{} $\ds_\system$ and an integer $k$, we use function $\Call{Simul}{}$ to sample \emph{a sequence of \datastates{}} $\ds_0, \ds_1,\ldots, \ds_k$, modelling $k$-steps of a computation from $\ds_\system=\ds_0$. 
Each simulation step is performed by means of function $\Call{SimStep}{}$.
The exact definition of function $\Call{SimStep}{}$ depends on the kind of \emph{model} underlying the generation of the \traccione{}.
As we are abstracting from such a model, we only assume that for any $\ds \in \D$ and measurable set $\mesD \in \borel_{\D}$:
\[
\PP(\Call{SimStep}{\ds}\in \mesD)=
\cstep(\ds)(\mesD).
\]

We use function $\Call{Estimate}{}$ to obtain the empirical \traccione{} of system $\system$ starting from $\ds_\system$.
Function $\Call{Estimate}{\ds_\system,k,N}$ invokes $N$ times function $\Call{Simul}{}$ to obtain $N$ sampled sequences of \datastates{} $\ds_0^{j},\dots,\ds_k^{j}$, for $j=1,\dots,N$, from $\ds_\system=\ds_0$.
It returns the sequence of sets of samples $E_0,\ldots,E_k$, where $E_i$ is the tuple $\ds_i^{1},\ldots,\ds_i^{N}$ of the \datastates{} observed at time $i$ in each of the $N$ sampled computations.

\end{document}